\newcommand{\be}{\begin{equation}} 
\newcommand{\ee}{\end{equation}}
\newcommand{\beq}{\begin{eqnarray}}
\newcommand{\eeq}{\end{eqnarray}}
\def\squareforqed{\hbox{\rlap{$\sqcap$}$\sqcup$}}
\def\qed{\ifmmode\squareforqed\else{\unskip\nobreak\hfil
\penalty50\hskip1em\null\nobreak\hfil\squareforqed
\parfillskip=0pt\finalhyphendemerits=0\endgraf}\fi}
\def\endenv{\ifmmode\;\else{\unskip\nobreak\hfil
\penalty50\hskip1em\null\nobreak\hfil\;
\parfillskip=0pt\finalhyphendemerits=0\endgraf}\fi}
\DeclareMathOperator{\Tr}{Tr}
\newcommand{\I}{\mathbbm{1}}
\newcommand{\ra}{\rangle}
\newcommand{\la}{\langle}
\newtheorem*{rep@theorem}{\rep@title}
\newcommand{\newreptheorem}[2]{%
\newenvironment{rep#1}[1]{%
 \def\rep@title{#2 \ref{##1}}%
 \begin{rep@theorem}}%
 {\end{rep@theorem}}}
\def\tr{\mbox{tr}}
\theoremstyle{remark}
\newtheorem{thm}{Theorem}%[section]
\theoremstyle{remark}
\newtheorem{lemma}{Lemma}
\newtheorem{defi}{Definition}
\begin{document}

\title{Certification of two-qubit quantum systems with temporal inequality}

\author{Chellasamy Jebarathinam}
\author{Gautam Sharma}
\author{Sk Sazim}
\author{Remigiusz Augusiak}
\affiliation{Center for Theoretical Physics, Polish Academy of Sciences, Aleja Lotnik\'{o}w 32/46, 02-668 Warsaw, Poland}

\begin{abstract}

Self-testing of quantum devices based on observed measurement statistics  
is a method to certify quantum systems using minimal resources.
%Self-testing multiqubit quantum systems without the requirement of the spatial separation between the subsystems might find applications in certifying quantum devices for quantum computation.
In Ref. [Phys. Rev. \textbf{A} 101, 032106 (2020)], a scheme based on observing measurement statistics that demonstrate Kochen-Specker contextuality has been shown 
to certify two-qubit entangled states and measurements without the 
requirement of spatial separation between the subsystems. However, this scheme assumes a set of compatibility conditions on the measurements 
which are crucial to demonstrating Kochen-Specker contextuality. In this work, we propose a self-testing protocol to certify the above two-qubit states and measurements without the assumption of the compatibility conditions, and at the same time without requiring the spatial separation between the subsystems. Our protocol is based on the observation of sequential correlations leading to the maximal violation of a temporal inequality derived from non-contextuality inequality. Moreover, our protocol is robust to small experimental errors or noise.
\end{abstract}

\maketitle
\section{Introduction}
%\textit{Introduction.} 
Realizing fault-tolerant quantum computation plays a key role in achieving quantum technologies. A fault-tolerant quantum computer may be achieved by making use of topological qubits in which qubits are encoded as the subspace of a high dimensional space of suitable physical system such as Majorana fermions \cite{NSS+08,KKL+17}. The circuit model of quantum computing can be replaced by Pauli-based computing which makes use of only a minimal number of Pauli measurements \cite{BSS16}. Such Pauli-based computing can be realized by topological qubits \cite{MB22}.  

Before performing any quantum information processing task, it is important to certify how close are the relevant quantum devices to the ideal ones. There are the certification methods such as tomography \cite{PCZ97,DL01,TNW+02} and self-testing \cite{SB20} that have been explored to certify the relevant quantum devices. While certifying quantum devices, it is desirable to use methods that make use of a limited number of resources and minimize the assumptions on the quantum devices. Certification schemes based on tomography can become resource-consuming depending on the quantum system to be certified and it requires certain assumptions such as the Hilbert space dimension of the quantum system has to be trusted.  
On the other hand, self-testing of quantum devices aims to provide certification using a minimal number of resources and a minimal number of assumptions. Self-testing was originally introduced as a device-independent certification of quantum state and measurements for quantum cryptographic applications \cite{MY04}. Device-independent certification schemes are based on the observation of non-classical correlations that imply Bell non-locality \cite{Bel64, BCP+14}. These schemes do not depend on the detailed characterization of the quantum devices, importantly, they do not assume the Hilbert space dimension of the quantum systems to be certified.

Self-testing based on  Bell non-locality has a restriction that it can be used to certify composite quantum systems that have entanglement and admit spatial separation between the subsystems and local measurements on subsystems. Recently, self-testing based on observation of Kochen-Specker contextuality \cite{KS67} has been explored to certify quantum systems which do not have entanglement or  do admit a spatial separation between the subsystems \cite{BRV+19, BRV+21,IMO+20,SJA22}.  Such a certification method is relevant in the context of certification quantum devices for quantum computation purposes as computation typically happens in a local fashion.   
Since quantum measurements that demonstrate Kochen-Specker contextuality
require to satisfy certain compatibility conditions, and self-testing statements based on contextuality explicitly require the assumption of certain compatibility conditions. 

Sequential correlation inequalities such as Legget-Garg inequality \cite{LG85} or temporal inequality \cite{MKT+14} can be used to demonstrate non-classicality based on quantum measurements on single quantum systems without the assumption of certain compatibility conditions. Motivated by this, such inequalities have also been explored to certify quantum measurements on single quantum systems without the requirement of any compatibility conditions on the quantum measurements \cite{MMJ+21,SSA20,DMS+22,SKB22}. However these certification schemes also make a few minimal assumptions, for instance, in Ref. \cite{MMJ+21,SSA20,DMS+22} it is assumed that a) the preparation device always prepares a maximally mixed state and  b) measurements do not have any memory and they return only the post-measurement state, while in Ref. \cite{SKB22} one can certify only the temporal correlation matrix. Moreover, one can lift the assumption of producing maximally mixed states by the measurement device with a) ensuring that the input state to the device is always a full rank state and b) sequential action of same measurement will output a certain outcome always \cite{2023arXiv230701333S}.

In Ref. \cite{IMO+20}, a certification scheme  has been proposed to certify quantum measurements of a fermionic system for topological quantum computing. These quantum measurements are a subset of two-qubit Pauli measurements that demonstrate Kochen-Specker contextuality. As this scheme is based on observed statistics that imply Kochen-Specker contextuality, it assumes certain compatibility conditions on the quantum measurements. In this work, we  propose a certification scheme based on a temporal inequality to certify  
the subset of two-qubit Pauli measurements along with the quantum state. Our scheme may be applied to certify those two-qubit quantum states and measurements for topological quantum computing without assuming the compatibility relations, unlike the approach of \cite{IMO+20}. However, we assume that the measurements are projective. In this respect, we would like to comment that the Ref. \cite{DMS+22} assumes that measurement device produces always maximally mixed state as well as same measurement applied many times gives similar outcome in order to prove that measurements involved in the experiments are projective. Also, a similar conclusion can be achieved by assuming full rank input to the measurement device instead of maximally mixed output state \cite{2023arXiv230701333S}. Therefore, the above three assumptions seem equivalent to us.  
%we can assume one of the four equivalent facts -- items {\bf 1, 2} discussed above, {\bf 3)} the measurements are compatible and {\bf 4)} the measurements are projective. 
Presently, we do not have clear understanding which one is more practical with respect to experimental point of view. Moreover, we also show that our self-testing scheme is robust to small experimental errors or noise.       

%Our work is organized as follows. In Section \ref{preliminaries} we outline the  sequential  scenario and introduce the temporal noncontexutality inequality that will be used for our certification purpose. In Section \ref{self-testing2qPm} we present self-testing of two-qubit Pauli measurements based on this temporal inequality.  Finally, in Section \ref{Conc}, we present our conclusions.
Before we discuss our methodology and main results, we provide some background information necessary for further considerations.

%\section{}\label{preliminaries}
\textit{Sequential correlations.}--We consider a sequential measurements scenario to observe a violation of a temporal inequality. The scenario consists of a preparation device that prepares a single quantum system in an unknown quantum state $\rho$ and a sequence of two and three black-box operations. These operations are dichotomic measurements denoted by $A_i$, with $i=1,\ldots,6$. The outcomes of these measurements are denoted by $a_i$ which can take the values $\pm 1$. We then assume that these measurements are projective, and so $A_i$ are the standard quantum observables such that $A_i^2=\mathbbm{1}$. Notice here that, while we do not restrict the dimension of the underlying Hilbert space, we cannot, however, exploit here the Naimark dilation argument that allows one to represent generalized measurements in terms of projective ones acting on a higher-dimensional Hilbert space. This is because we do not impose any commutation relations for the measurements.

In our sequential scenario, we finally need to assume that the measurements $A_i$ are non-demolishing, i.e., they do not destroy the physical system, a also that each black box has no memory and returns the actual post-measurement state.

%In the context of our scenario, we consider sequential correlations of two measurements from the following sets: $\{A_1,A_4\}$, $\{A_2,A_5\}$ and $\{A_3,A_6\}$ and three measurements from the following sets: $\{A_1,A_2,A_3\}$ and $\{A_4,A_5,A_6\}$.
In an experiment that realizes this scenario, the joint probabilities corresponding to the sequences of two measurements and three measurements can be observed. For instance, the joint probabilities of measuring $\{A_1,A_4\}$ in the temporal order  $A_1 \rightarrow  A_4$ and  $\{A_1,A_2,A_3\}$  in the temporal order $A_1 \rightarrow  A_2  \rightarrow  A_3$ is given by $P(a_1,a_4|A_1,A_4)$ and $P(a_1,a_2,a_3|A_1,A_2,A_3)$, respectively. 
In terms of these joint probabilities, the second and third-order sequential correlations denoted by 
$\braket{A_1A_4}_{\pi}$ and $\braket{A_1A_2A_3}_{\pi}$ are defined as 
\begin{align*}
\braket{A_1A_4}_{\pi}=\sum_{a_1,a_4}a_1a_4 P(a_1,a_4|A_1,A_4),\:\: {\rm and}~~~~~~~~~\\
\braket{A_1A_2A_3}_{\pi}=\sum_{a_1,a_2,a_3}a_1a_2a_3 P(a_1,a_2,a_3|A_1,A_2,A_3), 
\end{align*}
respectively.

%\textcolor{red}{Notice that we do not restrict the dimension of the Hilbert %space. However, we cannot use the Neumark dilation theorem to conclude that the %joint probability distribution of two sequential general measurements can be %reproduced using projective measurements and a pure state in higher dimension. %This is because we do not assume the commuting relations for the measurements. %Therefore, we assume from the beginning that all the measurements are %projective. Then we can infer that these joint probabilities observed in our %scenario using a mixed state $\rho$ and a set of projective measurements can %also be reproduced using a pure state $\ket{\psi}$ in higher dimension.
%%Since we do not restrict the dimension of the Hilbert space, due to the %Neumark dilation theorem, the joint probabilities observed in our scenario %using a mixed state $\rho$ and a set of general measurements  can also be %reproduced using a pure state $\ket{\psi}$ and a set of projective measurements %$A_i$s. Both might not be true. Maybe we have to assume that measurements are %projective however, we dropping commutation relations.
%}  
%
Quantum mechanical version of the  sequential correlations, for instance, 
such as $\braket{A_1A_4}_{\pi}$ and
$\braket{A_1A_2A_3}_{\pi}$ can then be written as,   
%\beq \label{QMseq2}
%
\begin{align}\label{dupablada}
    \la A_1 A_4 \ra _{\pi} = \frac{1}{2}\tr \bigg(\rho  \{A_1,A_4 \} \bigg),\:\: {\rm and}~~~~~~~~~\\
    \la A_1 A_2 A_3 \ra _{\pi}  = \frac{1}{4} \tr \bigg(\rho  \{ A_1,\{ A_2,A_3 \} \} \bigg),
\end{align}
respectively, where $\rho=\ket{\psi}\!\!\bra{\psi}$ (see Appendix in Ref. \cite{PQK18} for the derivation of such formulae) and $\{A,B\}=AB+BA$ is the standard anti-commutator. The other sequential correlations can also be similarly expressed in terms of the anti-commutator.

%\section{Non-contextuality inequality}
\textit{Non-contextuality inequality.}--Using these sequential correlations one can demonstrate both contextual and temporal correlations. We will now first describe how to demonstrate contextual correlation here. Suppose the sets of measurements $\{A_1,A_4\}$, $\{A_2,A_5\}$ and $\{A_3,A_6\}$, $\{A_1,A_2,A_3\}$ and $\{A_4,A_5,A_6\}$ that are measured sequentially are treated as contexts, i.e., the measurements in these sets commute with each other:
\begin{align}\label{dupa}
&    [A_1,A_4]=[A_2,A_5]=[A_3,A_6]=[A_1,A_2]=[A_1,A_3] \nonumber \\
& =[A_2,A_3]=[A_4,A_5]=[A_4,A_6]=[A_5,A_6]=0,
\end{align}
where $[A,B]=AB-BA$. 
Then, the quantum mechanical version of the sequential correlations simplify as $\braket{A_1A_4}_{\pi}=\braket{A_1A_4}$ and $\braket{A_1A_2A_3}_{\pi}=\braket{A_1A_2A_3}$, such that these sets of measurements can be used to demonstrate Kochen-Specker contextuality. For instance, the sequential
measurements of these contexts with the following two-qubit observables:
\beq \label{Ai}
A_1 &=& X \otimes \mathbbm{1}, \qquad A_4 =  \mathbbm{1} \otimes X, \nonumber \\
A_2 &=& \mathbbm{1} \otimes Z, \qquad A_5 =  Z \otimes \mathbbm{1}, \nonumber \\
A_3 &=&  X \otimes Z, \qquad          A_6 =  Z  \otimes X
\eeq
on the two-qubit maximally entangled state,
\begin{equation}\label{2qmes}
\ket{\phi^+}=(\ket{00}+\ket{11})/\sqrt{2},
\end{equation}
imply Kochen-Specker contextuality through a logical contradiction
by providing noncontextual value assignments to each observable in the context. The above contexts form a subset of the contexts pertaining to 
the famous Peres-Mermin square demonstrating Kochen-Specker contextuality 
for any two-qubit state \cite{Per90, Mer90}.
Moreover, one can consider a non-contextuality inequality as used in Ref. \cite{CFR+08}  in terms of the  expectation values corresponding to joint measurements of the observables in the above contexts  as follows: 
\begin{align}\label{ncIneq}
    &\mathcal{I}_{NC}=\braket{A_1A_2A_3}+\braket{A_4A_5A_6} \nonumber  \\
    &+\braket{A_1A_4}+\braket{A_2A_5}-\braket{A_3A_6} \le \eta_C,
\end{align}
where $\eta_C=3$ is the classical bound of the inequality. The quantum bound $\eta_Q=5$ of the above inequality can be obtained from the state given by Eq. (\ref{2qmes})
for the measurements given by Eq. (\ref{Ai}). In Ref. \cite{IMO+20}, the measurement statistics pertaining to the maximal violation of the inequality (\ref{ncIneq}) have been shown to provide robust certification of two-qubit Pauli measurements in a set-up such as Majorana Fermions.

Now, the stage is set to discuss our set up and main results regarding certification of two-qubit state and measurements using a temporal inequality. In the next section (Sec. \ref{self-testing2qPm}), we discuss the procedure to obtain temporal inequality from the non-contextuality inequality considered earlier within the sequential correlations scenario. We also discuss how such inequality can be used to certify the two qubit systems with permissible small experimental errors and noises. We conclude in the Sec. \ref{Conc}.

\section{Robust certification of the two-qubit systems using temporal inequality}\label{self-testing2qPm}
\subsection{temporal inequality}
%\textit{temporal inequality.} 
Our goal is to propose a temporal extension of the inequality (\ref{ncIneq}) to certify the two-qubit maximally entangled state \eqref{2qmes} and measurements \eqref{Ai} without assuming the compatibility relations (\ref{dupa}). To this aim, we consider the following temporal inequality:
\begin{widetext}
   \begin{align}\label{TncIneq}
   \mathcal{I}_T :=\frac{1}{2}\Big(\braket{A_1A_2A_3}_{\pi}+\braket{A_2A_1A_3}_{\pi}+\braket{A_4A_5A_6}_{\pi} +\braket{A_5A_4A_6}_{\pi}  \Big)  +\braket{A_1A_4}_{\pi}+\braket{A_2A_5}_{\pi}-\braket{A_3A_6}_{\pi} \le \eta_C,
\end{align} 
\end{widetext}
where no compatibility relations for the measurements are assumed.
It has been argued in Ref. \cite{MKT+14} that the existence of a single joint probability distribution for all the outcomes of the measurements as the common underpinning assumption for the Bell-type,
Kochen-Specker non-contextuality and temporal inequalities. Therefore, one can derive 
the classical bound of any of these three types of inequalities as the maximum value that can be attained by any deterministic strategy. Hence, for the inequality \eqref{TncIneq} we have
\begin{eqnarray}
\eta_C &\!\!=\!\!& \max_{a_i\in\{ \pm 1\}} [  \frac{1}{2}(a_1a_2a_3+a_2a_1a_3 +a_4a_5a_6+a_5a_4a_6)  \nonumber \\
&& \quad\quad\quad\quad\quad+a_1a_4+a_2a_5-a_3a_6 ],\nonumber\\
&\!\!=\!\!&\max_{a_i\in\{ \pm 1\}} [   a_1a_2a_3 +a_4a_5a_6   
+a_1a_4+a_2a_5-a_3a_6 ]\nonumber
\end{eqnarray}
which gives $\eta_C=3$. 

Then, in order to determine the maximal quantum value $\eta_Q$ of $\mathcal{I}_T$ let us observe that it directly follows from the definitions \eqref{dupablada} that the sequential expectation values are bounded as 
\begin{equation*}
 -1\leq\braket{A_iA_j}_{\pi}\leq 1,   \quad
    -1\leq\braket{A_iA_jA_k}_{\pi}\leq 1
\end{equation*}
for any $i$, $j$ and $k$. As a consequence
$\eta_Q=5$. What is more, this value is achieved by the same quantum state 
and measurements that lead to the maximal quantum violation of the
non-contextuality inequality \eqref{ncIneq}.
In the next section, we demonstrate that the quantum bound of the inequality (\ref{TncIneq}) can be used to certify this quantum realization without the assumption of the compatibility relations as used in the case of the inequality (\ref{ncIneq}). 

\subsection{Robust self testing of two-qubit systems}
Consider an experimental situation that realizes the sequential scenario, in which we assume that both the state and measurements and their compatibility relations are unknown. 
%Since we do not specify the dimension of the underlying Hilbert space, without loss of generality, we can assume that the measurements are projective and the state is pure. In other words, any correlations giving rise to the violation of the temporal noncontexuality inequality (\ref{TncIneq}) can always be reproduced with a pure state $\rho=|\psi\rangle\!\langle\psi|$ and observables $A_i$ obeying $A_i^2=\mathbbm{1}$,all acting on some Hilbert space $\mathcal{H}$ of unknown dimension.
Together, we consider a reference experiment 
with a known pure state $\ket{\hat{\psi}}\in\mathbbm{C}^d$ for 
$d=4$ and known observables $\hat{A}_i$ acting on $\mathbbm{C}^4$
that obey the compatibility relations given by Eq. (\ref{dupa}). We assume that our reference experiment leads to the maximal violation of the 
temporal inequality is given by Eq. (\ref{TncIneq}).
With these two experiments at hand, for our purpose, we adopt the definition of self-testing used in Ref. \cite{IMO+20} 
(see also Ref. \cite{SSA20}) as given below. 
\begin{defi}\label{defselftesting}
Given that the state $|\psi \ra  \in \mathcal{H}$ and a set of measurements $ A_i $ with $i\in [1,6]$, maximally violate the given temporal inequality (\ref{TncIneq}), then the self testing of the state $| \hat{\psi}  \ra  \in \mathbbm{C}^4$ and the set of observables $ \hat{A_i} $ acting on  $\mathbbm{C}^4$ is defined by the existence of a projection $P:\mathcal{H} \rightarrow \mathbb{C}^4$ and a unitary $U$ acting  on $\mathbbm{C}^4$ such that
\beq\label{DefSelf}
U^\dagger (P\, A_i\, P^\dagger) U &=& \hat{A}_i, \\
U (P\,|\psi \ra) &=& |\hat{\psi} \ra.
\eeq
\end{defi}
In other words, the above definition implies 
that based on the observed maximal violation of the inequality, 
one is able to identify 
a subspace $V=\mathbbm{C}^4$ in $\mathcal{H}$ on which 
all the observables act invariantly. Equivalently, 
$A_i$s can be decomposed as $A_i=\hat{A}_i\oplus A_{i}'$, where
$\hat{A}_i$ act on $V$, whereas $A_i'$ act on the orthocomplement of $V$ in $\mathcal{H}$; in particular, $A_i'\ket{\psi}=0$. Moreover, there is a unitary 
$U^{\dagger}\,\hat{A}_i\,U=\hat{A}_i$.

\textit{Symmetries of the inequality.} Before we proceed with the self-testing proof, let us observe the following symmetries of the inequality (\ref{TncIneq})
\begin{align} 
    &1.\quad  A_1 \leftrightarrow A_2 \quad \text{and} \quad A_4  \leftrightarrow A_5  \label{symmetry1} \\
    &2. \quad A_1 \leftrightarrow A_3\quad \text{and} \quad A_4 \leftrightarrow -A_6 \label{symmetry2}\\
    &3. \quad A_2 \leftrightarrow -A_3 \quad \text{and}\quad A_5 \leftrightarrow A_6 \label{symmetry3}\\
    &4. \quad A_1 \leftrightarrow A_4 \quad \text{and} \quad A_2 \leftrightarrow A_5 \quad \text{and} \quad A_3 \leftrightarrow A_6 \label{symmetry4}
\end{align}
It can be checked that in any of these four cases, the inequality  (\ref{TncIneq}) remains invariant. In the following, we will employ these symmetries to prove our self-testing statement.

%\textit{Self-testing.} 
Now, assume that our first experiment achieves the quantum bound of the inequality \eqref{TncIneq}. 
This directly implies that all except the last correlation functions of the inequality
take value $1$, whereas the last term equals $-1$, which
via the Cauchy-Schwartz inequality translate to the following equations:
\beq
A_1A_2A_3 |\psi\ra &=& |\psi\ra \quad \& \quad \texttt{permutations}, \label{MaxQ_const1} \\
A_4A_5A_6 |\psi\ra &=& |\psi\ra  \quad \& \quad \texttt{permutations},  \label{MaxQ_const2} \\
A_1A_4 |\psi\ra &=& |\psi\ra  \quad \& \quad \texttt{permutations},    \label{MaxQ_const3}  \\ 
A_2A_5 |\psi\ra &=& |\psi\ra  \quad \& \quad \texttt{permutations},  \label{MaxQ_const4} \\
A_3A_6 |\psi\ra &=& -|\psi\ra  \quad \& \quad \texttt{permutations} \label{MaxQ_const5} 
\eeq
where \texttt{permutations} refers to the fact that the above relations also hold if we permute the observables. Using these identities,
we now proceed to show the following anti-commutation relations:
\begin{align}\label{anticom}
&\{A_1,A_5\}|\psi\ra = \{A_1,A_6\}|\psi\ra=\{A_2,A_4\}|\psi\ra
\nonumber \\ &= \{A_2,A_6\}|\psi\ra  
=\{A_3,A_4\}|\psi\ra = \{A_3,A_5\}|\psi\ra= 0.
\end{align}

One directly deduces from the identities given by  Eqs. (\ref{MaxQ_const1})-(\ref{MaxQ_const5}) that
\beq \label{preSym2q}
A_1A_5 |\psi\ra &=A_1A_2 |\psi\ra = A_3 |\psi\ra 
                =-A_6 \ket{\psi} \nonumber \\&= - A_5 A_4 |\psi\ra = - A_5A_1 |\psi\ra, 
\eeq
where in the first line we used $A_5 |\psi\ra = A_2 |\psi\ra$ from Eq. \eqref{MaxQ_const4} and  $A_1A_2 |\psi\ra = A_3|\psi\ra$ that stems from Eq. \eqref{MaxQ_const1}. On the other hand, in the second line, we used $A_3 |\psi\ra = -A_6 |\psi\ra$ from Eq. \eqref{MaxQ_const5}, 
 $A_6 |\psi\ra =A_5A_4 |\psi\ra$ from Eq. \eqref{MaxQ_const2}
and $A_4 |\psi\ra = A_1|\psi\ra$ from Eq. \eqref{MaxQ_const3}. 

 In the following, we employ the symmetries in Eqs. \eqref{symmetry1}-\eqref{symmetry4} to obtain other similar relations. Employing the first, second and third symmetries \eqref{symmetry1}-\eqref{symmetry3} of the inequality in the argument given by Eq. (\ref{preSym2q}), we can get 
\beq \label{preSym2qa2}
A_2A_4 |\psi\ra &=& -A_4A_2|\psi\ra, \label{preSym2qa1} \\ 
A_3A_5 |\psi\ra &=& -A_5A_3|\psi\ra, \label{preSym2qa2} \\
A_1A_6 |\psi\ra &=& -A_6A_1|\psi\ra, \label{preSym2qa3}
\eeq
respectively. Now, employing the second and first symmetries of the inequality in Eqs. (\ref{preSym2qa1}) and (\ref{preSym2qa2}), respectively, we get the remaining anti-commutation relations
\beq
A_2A_6|\psi\ra&=&-A_6A_2|\psi\ra,  \\ \label{preSym2qa4}
A_3A_4|\psi\ra&=&-A_4A_3|\psi\ra. \label{preSym2qa5}
\eeq

Now, as in Ref. \cite{IMO+20} we define the subspace 
\beq \label{defInv}
    V &:=& \mathrm{span} \{ |\psi\ra , A_1 |\psi\ra, A_5 |\psi\ra, A_1A_5 |\psi\ra \},
\eeq
and prove the following fact for it.
\begin{lemma}\label{invsubV}
$V$ is an invariant subspace under the action of all the observables $A_i$ for $i\in [1,6]$.
\end{lemma}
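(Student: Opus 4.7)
The plan is to show that each of the four spanning vectors of $V$ is mapped into $V$ by every $A_i$, using the maximum-violation identities \eqref{MaxQ_const1}--\eqref{MaxQ_const5}, their permutations, the anti-commutation relations \eqref{anticom}, and the fact that $A_i^2=\I$. A key preliminary observation is that the spanning set has several equivalent descriptions on $\ket{\psi}$. From \eqref{MaxQ_const3}--\eqref{MaxQ_const5} and $A_i^2=\I$ one gets $A_4\ket{\psi}=A_1\ket{\psi}$, $A_5\ket{\psi}=A_2\ket{\psi}$, and $A_6\ket{\psi}=-A_3\ket{\psi}$. Likewise, multiplying $A_1A_2A_3\ket{\psi}=\ket{\psi}$ (and its permutations) on the left by the outer operator yields $A_1A_2\ket{\psi}=A_3\ket{\psi}$, $A_2A_3\ket{\psi}=A_1\ket{\psi}$, $A_1A_3\ket{\psi}=A_2\ket{\psi}=A_5\ket{\psi}$, and similarly for the $A_4A_5A_6$ triple. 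Combining these gives the crucial identity $A_1A_5\ket{\psi}=A_1A_2\ket{\psi}=A_3\ket{\psi}=-A_6\ket{\psi}$, so the fourth basis vector of $V$ admits these alternative forms.

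With these identifications in hand, I would go through each $A_i$ and check its action on $\{\ket{\psi},A_1\ket{\psi},A_5\ket{\psi},A_1A_5\ket{\psi}\}$. For $A_1$ and $A_5$, invariance is immediate from $A_i^2=\I$ together with the anti-commutation $A_5A_1\ket{\psi}=-A_1A_5\ket{\psi}$ from \eqref{anticom}. For $A_2$, one uses $A_2\ket{\psi}=A_5\ket{\psi}$, $A_2A_5\ket{\psi}=\ket{\psi}$, $A_2A_1\ket{\psi}=A_1A_2\ket{\psi}=A_1A_5\ket{\psi}$ (since $A_1,A_2$ commute on $\ket{\psi}$), and $A_2(A_1A_5\ket{\psi})=A_2A_3\ket{\psi}=A_1\ket{\psi}$. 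The analysis for $A_3$, $A_4$, $A_6$ is structurally identical: reduce any product acting on $A_1A_5\ket{\psi}$ to one acting on $A_3\ket{\psi}$ (or $-A_6\ket{\psi}$), then use either a two-observable permutation identity or an anti-commutation from \eqref{anticom} to push the outer operator through, finishing with $A_i^2=\I$. The symmetries \eqref{symmetry1}--\eqref{symmetry4} shorten the bookkeeping further if desired.

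The main obstacle I anticipate is handling the vector $A_1A_5\ket{\psi}$ under operators that do not commute with both $A_1$ and $A_5$ on $\ket{\psi}$, e.g.\ $A_4$ and $A_6$. The trick for these is to first replace $A_1A_5\ket{\psi}$ by its equivalent form $A_3\ket{\psi}$ (or $-A_6\ket{\psi}$), thereby collapsing a three-operator expression into a two-operator one for which the anti-commutation identity \eqref{anticom} and the permutation forms of \eqref{MaxQ_const1}--\eqml{MaxQ_const5} apply directly. Once every $A_i$ has been shown to send each basis vector back into $\mathrm{span}\{\ket{\psi},A_1\ket{\psi},A_5\ket{\psi},A_1A_5\ket{\psi}\}$, invariance follows by linearity and the lemma is proved. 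I do not expect to need that $V$ be four-dimensional here; that non-degeneracy question can be postponed to the next stage of the self-testing argument.
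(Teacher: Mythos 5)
Your proposal is correct and takes essentially the same route as the paper: both proofs verify the action of each $A_i$ on the four spanning vectors of $V$ using the maximal-violation identities \eqref{MaxQ_const1}--\eqref{MaxQ_const5}, the derived relations $A_1A_5\ket{\psi}=A_3\ket{\psi}=-A_6\ket{\psi}$, and the anti-commutation relations \eqref{anticom}. The only cosmetic difference is that the paper dispatches $A_2$, $A_4$ and $A_6$ via the symmetries \eqref{symmetry1} and \eqref{symmetry4} after treating $A_1$, $A_5$ and $A_3$ explicitly, whereas you sketch the explicit reductions for each operator; your key trick of rewriting $A_1A_5\ket{\psi}$ as $A_3\ket{\psi}$ before applying $A_4$ or $A_6$ is exactly what makes those cases go through.
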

\begin{proof}
Firstly, under the action of $A_1$, $V$ is trivially invariant. Next, using the fact that
$A^2_5=\I$ and $\{A_1 ,A_5\}\ket{\psi}=0$, it follows that $A_5 V$ belongs to $V$ up to the phase factor $-1$. Having established that $V$ is invariant under the action of $A_1$ and $A_5$, from the first symmetry \eqref{symmetry1} of the inequality, it follows that $A_2$ and $A_4$  also leave the subspace $V$ invariant.

Next, we proceed to see the action of $A_3$ on all the elements of $V$.  
Using the relations given by Eqs. (\ref{MaxQ_const1}) and (\ref{MaxQ_const4}), it follows that 
$A_3\ket{\psi}=A_1A_5\ket{\psi}$, $A_3A_1\ket{\psi}=A_5\ket{\psi}$, $A_3A_5\ket{\psi}=A_1\ket{\psi}$ and $A_3A_1A_5\ket{\psi}=\ket{\psi}$. 
Therefore, we have arrived at $A_3 V \subseteq V$.
Using the fourth symmetry in \eqref{symmetry4} of the inequality in these arguments used to establish invariance of $V$ under the action of $A_3$, it then follows 
that $A_6 V \subseteq V$.
\end{proof}

Due to Lemma \ref{invsubV}, it suffices for our purpose to identify the form of the state $|\psi\ra$ and the operators $A_i$  restricted to the subspace $V$. 
In fact, the whole Hilbert space splits as $\mathcal{H}=V\oplus V^{\perp}$, 
where $V^{\perp}$ is an orthocomplement of $V$ in $\mathcal{H}$. 
Then, the fact that $V$ is an invariant subspace of all
the observables $A_i$ means that they have the 
following block structure 
\begin{equation}\label{block}
    A_i=\hat{A}_i\oplus A_i',
\end{equation}
where $\hat{A}_i=PA_iP$ with $P:\mathcal{H}\to V$ being a projection onto $V$. Since $A_i'$  
act trivially on $V$, that is $A_i'V=0$, which means that the
observed correlations giving rise to the maximal violation of 
the inequality (\ref{TncIneq}) come solely from the subspace $V$, 
in what follows we can restrict our attention to the 
operators $\hat{A}_i$.

\begin{lemma}\label{lem_comm_inv}
Suppose the maximal quantum violation of the inequality (\ref{TncIneq}) is observed. 
Then, $[ \hat{A}_1,\hat{A}_2 ] =[ \hat{A}_1,\hat{A}_3 ] = [ \hat{A}_2,\hat{A}_3 ] = [ \hat{A}_4,\hat{A}_5 ]=
[ \hat{A}_4,\hat{A}_6 ]=[ \hat{A}_5,\hat{A}_6 ] = [ \hat{A}_1,\hat{A}_4 ] =[ \hat{A}_2,\hat{A}_5 ] =[ \hat{A}_3,\hat{A}_6 ]= 0$.
\end{lemma}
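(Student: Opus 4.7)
The plan is to exploit the explicit basis of $V$ from Eq.~(\ref{defInv}) together with the identities (\ref{MaxQ_const1})--(\ref{MaxQ_const5}) and the anti-commutation relations (\ref{anticom}) to write down the action of every $\hat{A}_i$ on each basis vector of $V$, and then verify each of the nine required commutators by direct calculation on each basis vector. Throughout I would write $v_0:=\ket{\psi}$, $v_1:=A_1\ket{\psi}$, $v_2:=A_5\ket{\psi}$, and $v_3:=A_1A_5\ket{\psi}$.

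I would first extract a handful of identities on $\ket{\psi}$ itself. Eqs.~(\ref{MaxQ_const1}) and (\ref{MaxQ_const2}) together with the permutations they include yield, after multiplying by the third operator and using $A_k^2=\I$, the identities $A_iA_j\ket{\psi}=A_jA_i\ket{\psi}=A_k\ket{\psi}$ whenever $\{i,j,k\}$ equals $\{1,2,3\}$ or $\{4,5,6\}$. Eqs.~(\ref{MaxQ_const3})--(\ref{MaxQ_const5}) give analogously $A_4\ket{\psi}=A_1\ket{\psi}$, $A_5\ket{\psi}=A_2\ket{\psi}$ and $A_6\ket{\psi}=-A_3\ket{\psi}$, and substituting $A_2\ket{\psi}=A_5\ket{\psi}$ into $A_3\ket{\psi}=A_1A_2\ket{\psi}$ identifies $A_3\ket{\psi}=A_1A_5\ket{\psi}=v_3$ and hence $A_6\ket{\psi}=-v_3$.

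Using these identities together with the anti-commutations (\ref{anticom}), I would then compute $A_iv_j$ for every pair $(i,j)$. The action on $v_0$ is immediate from the previous paragraph; the action on $v_1$ or $v_2$ follows either by applying the triple identities directly (when $A_i$ sits in the same context as $A_1$ or $A_5$) or by first anti-commuting $A_i$ past $A_1$ or $A_5$ using (\ref{anticom}); the action on $v_3$ is most easily obtained by rewriting $v_3=A_3\ket{\psi}$. A representative calculation is
\begin{equation*}
A_4v_3=A_4A_3\ket{\psi}=-A_3A_4\ket{\psi}=-A_3A_1\ket{\psi}=-A_2\ket{\psi}=-v_2,
\end{equation*}
which uses, in order, $v_3=A_3\ket{\psi}$, $\{A_3,A_4\}\ket{\psi}=0$, $A_4\ket{\psi}=A_1\ket{\psi}$, and $A_3A_1\ket{\psi}=A_2\ket{\psi}$. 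Carried out systematically, this procedure yields, for each $A_i$, a signed permutation of $(v_0,v_1,v_2,v_3)$, and under the identification of $\{v_0,v_1,v_2,v_3\}$ with the four Bell states the six matrices reproduce the reference two-qubit Paulis $X\otimes\I$, $\I\otimes Z$, $X\otimes Z$, $\I\otimes X$, $Z\otimes\I$, $Z\otimes X$ exactly.

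With the action tables in hand, each of the nine commutators can be verified by a mechanical check on each of the four basis vectors: $\hat{A}_i\hat{A}_jv_k-\hat{A}_j\hat{A}_iv_k$ reduces to a difference of two basis vectors with matching signs that cancel. The only real obstacle is bookkeeping in the second step: a few actions such as $A_2v_3$, $A_4v_3$, and $A_6v_1$ require chaining two or three substitutions drawn from (\ref{MaxQ_const1})--(\ref{MaxQ_const5}) and (\ref{anticom}), and it is important to keep reducing every intermediate expression to a single operator acting on $\ket{\psi}$ so that the $\ket{\psi}$-level identities established above can be applied.
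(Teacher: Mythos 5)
Your proposal is correct and follows essentially the same route as the paper: both verify that each commutator annihilates the four spanning vectors $\ket{\psi}$, $A_1\ket{\psi}$, $A_5\ket{\psi}$, $A_1A_5\ket{\psi}$ of $V$ by chaining the $\ket{\psi}$-level identities from Eqs.~(\ref{MaxQ_const1})--(\ref{MaxQ_const5}) together with the anti-commutation relations (\ref{anticom}). The only organizational difference is that you build the full signed-permutation action table for all six operators and check all nine commutators exhaustively, whereas the paper checks only $[A_1,A_2]$ and $[A_1,A_4]$ explicitly and deduces the rest from the symmetries (\ref{symmetry1})--(\ref{symmetry4}); your representative computation of $A_4A_3\ket{\psi}=-v_2$ is correct and the bookkeeping you flag is exactly the content of the paper's explicit checks.
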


\begin{proof}
It is enough to show explicitly that $[ \hat{A}_1,\hat{A}_2 ] =[ \hat{A}_1,\hat{A}_4 ]=0$. Then, by using symmetries of the inequality (\ref{TncIneq}), the other commutators can be argued to vanish. 

To show that $[ \hat{A}_1,\hat{A}_2 ]=0$, we prove that  the action of the commutator $[A_1,A_2]$ on 
the invariant subspace vanishes. First, from the relations given by Eq. (\ref{MaxQ_const1}), we have
$A_1A_2\ket{\psi}=A_2A_1\ket{\psi}=A_3\ket{\psi}$ which implies that $[A_1,A_2]\ket{\psi}=0$. 
%Second, let us act  $[A_1,A_2]$ on $A_1\ket{\psi}$ and check whether it vanishes. 
Second, it also follows from the relations given by Eq. (\ref{MaxQ_const1}), that $A_1A_2A_1\ket{\psi}=A_2A_1A_1\ket{\psi}=A_2\ket{\psi}$ which implies that $[A_1,A_2]A_1\ket{\psi}=0$.
Third, using the relation given by Eq. (\ref{MaxQ_const4}), we have $A_1A_2A_5\ket{\psi}=A_1\ket{\psi}$. On the other hand, using the relations given by Eqs. (\ref{MaxQ_const4}) and (\ref{MaxQ_const1}), we have  
$A_2A_1A_5\ket{\psi}=A_2A_1A_2\ket{\psi}=A_1\ket{\psi}$. Therefore, $[A_1,A_2]A_5\ket{\psi}=0$.
Finally, let us check whether  $[A_1,A_2]A_1A_5\ket{\psi}$ also vanishes. In the third case above,
we have shown that $A_2A_1A_5\ket{\psi}=A_1\ket{\psi}$. Using this relation,
it follows that $A_1A_2A_1A_5\ket{\psi}=\ket{\psi}$. On the other hand, using  
the relations given by Eq. (\ref{MaxQ_const4}) and (\ref{MaxQ_const1}), we have  $A_2A_1A_1A_5\ket{\psi}=\ket{\psi}$.
Therefore, we have arrived at $[A_1,A_2]A_1A_5\ket{\psi}=0$.

Using the symmetries \eqref{symmetry2}  and  \eqref{symmetry3} of the inequality in the above arguments, we also have $[A_2,A_3] V=0$ and $[A_1,A_3] V=0$, respectively. The fact that the action of the commutators $[A_1,A_2]$,  $[A_2,A_3]$ and $[A_1,A_3]$ vanish on $V$ implies that the action of the commutators $[A_4,A_5]$,  $[A_5,A_6]$ and $[A_4,A_6]$ on $V$ vanishes as  well. This follows from the fourth symmetry \eqref{symmetry4} of the inequality.

Next, we proceed to demonstrate that the action of the commutator $[A_1,A_4]$ on 
the invariant subspace vanishes. First, using Eq. (\ref{MaxQ_const3}), we have $A_1A_4\ket{\psi}=A_4A_1\ket{\psi}=\ket{\psi}$ which implies that $[A_1,A_4]\ket{\psi}=0$. Second, using the same relation,
we have $A_1A_4A_1\ket{\psi}=A_4\ket{\psi}$, on the other hand, $A_4A_1A_1\ket{\psi}=A_4\ket{\psi}$. From this, it follows that $[A_1,A_4]A_1\ket{\psi}=0$. Third, 
\be
A_1A_4A_5\ket{\psi}=A_1A_6\ket{\psi}
=-A_1A_3\ket{\psi}=-A_2\ket{\psi}, \nonumber
\ee
where we have used relations given by Eqs. (\ref{MaxQ_const2}), (\ref{MaxQ_const5}) and
(\ref{MaxQ_const1}), respectively. On the other hand,  we also have
\begin{align}
&&A_4A_1A_5=A_4A_1A_2\ket{\psi}=A_4A_3\ket{\psi}=-A_4A_6\ket{\psi} \nonumber \\
&&=-A_4A_4A_5\ket{\psi}=-A_2\ket{\psi} \nonumber
\end{align}
where in the first line we have used relations 
given by Eq. (\ref{MaxQ_const4}), (\ref{MaxQ_const1}) and (\ref{MaxQ_const5}), respectively,
and then we have employed  the relations given by Eqs. (\ref{MaxQ_const2}) and (\ref{MaxQ_const4}), respectively. Combining the above two equations imply that $[A_1,A_4]A_5\ket{\psi}=0$. Finally,
\begin{align}
&A_1A_4A_1A_5\ket{\psi}=A_1A_4A_1A_2\ket{\psi}=A_1A_4A_3\ket{\psi} \nonumber \\
&=-A_1A_4A_6\ket{\psi}=-A_1A_5\ket{\psi}=-A_1A_2\ket{\psi}=-A_3\ket{\psi} \nonumber
\end{align}
where in the first line we have used the relations given by Eqs. (\ref{MaxQ_const4}) and (\ref{MaxQ_const1}), respectively, and then we have employed the relations given by Eqs. (\ref{MaxQ_const5}), (\ref{MaxQ_const2}), (\ref{MaxQ_const4}) and (\ref{MaxQ_const1}), respectively.
On the other hand, we also have
\begin{equation}
A_4A_1A_1A_5\ket{\psi}=A_6\ket{\psi}=-A_3\ket{\psi}, \nonumber    
\end{equation}
where we have employed the relations given by Eqs. (\ref{MaxQ_const2}) and (\ref{MaxQ_const5}), respectively. Combining these two equations, we get that  $[A_1,A_4]A_1A_5\ket{\psi}=0$. 

Using the symmetries of the inequality in \eqref{symmetry1} and \eqref{symmetry2} in
the above arguments used to demonstrate that the action of the commutator   $[ A_1,A_4 ]$ vanishes on $V$, we will get $[ A_2,A_5 ] V =0$ and $[ A_3,A_6 ] V =0$, respectively.

\end{proof}

\begin{lemma}\label{lemmaanticomm}
Suppose the maximal quantum violation of the inequality (\ref{TncIneq}) is observed. 
Then, \beq\label{anticominv}
&&\{\hat A_1, \hat A_5\} = \{\hat A_1,\hat A_6\}  \nonumber \\
&=& \{\hat A_2,\hat A_4\}= \{\hat A_2,\hat A_6\}  \nonumber \\
&=& \{\hat A_3,\hat A_4\} = \{\hat A_3,\hat A_5\}= 0.
\eeq
\end{lemma}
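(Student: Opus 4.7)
The plan is to reduce the claim about the restricted operators to a statement about the original $A_i$ acting on $V$, and then verify that statement basis vector by basis vector. Since Lemma~\ref{invsubV} already established that $V$ is invariant under every $A_k$, we have $\hat A_k = A_k|_V$ and hence $\{\hat A_i,\hat A_j\} = \{A_i,A_j\}|_V$. It therefore suffices to show, for each of the six pairs $(i,j)$ appearing in \eqref{anticominv}, that $\{A_i,A_j\}\,w = 0$ for every $w$ in the spanning set $\{|\psi\rangle,\, A_1|\psi\rangle,\, A_5|\psi\rangle,\, A_1A_5|\psi\rangle\}$ of $V$.

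For the base vector $w=|\psi\rangle$ the identity is precisely Eq.~\eqref{anticom}. For the remaining three basis vectors I would combine three ingredients: (a) the involutivity $A_k^2=\mathbbm{1}$, (b) the commutation relations on $V$ established in Lemma~\ref{lem_comm_inv}, and (c) the maximum-violation identities \eqref{MaxQ_const1}--\eqref{MaxQ_const5}. The prototypical calculation is the one for $\{A_1,A_5\}A_1|\psi\rangle$: from $\{A_1,A_5\}|\psi\rangle=0$ we have $A_5A_1|\psi\rangle = -A_1A_5|\psi\rangle$, and premultiplying by $A_1$ together with $A_1^2=\mathbbm{1}$ gives $A_1A_5A_1|\psi\rangle = -A_5|\psi\rangle$, so that $\{A_1,A_5\}A_1|\psi\rangle = -A_5|\psi\rangle + A_5|\psi\rangle = 0$. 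The same ``push past and collapse the square'' pattern disposes of $A_5|\psi\rangle$ and $A_1A_5|\psi\rangle$ in a few lines each.

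To avoid carrying out essentially the same computation six separate times, I would execute it in full only for the pair $\{A_1,A_5\}$ and then obtain the remaining five pairs via the symmetries \eqref{symmetry1}--\eqref{symmetry4} of the temporal inequality, exactly as was done in the proofs of Lemmas~\ref{invsubV} and~\ref{lem_comm_inv}. Symmetry \eqref{symmetry1} converts $\{A_1,A_5\}$ into $\{A_2,A_4\}$; symmetry \eqref{symmetry3} (with an overall sign that is irrelevant for an anti-commutator) gives $\{A_1,A_6\}$; and suitable compositions of \eqref{symmetry1}--\eqref{symmetry4} yield $\{A_3,A_5\}$, $\{A_3,A_4\}$, and $\{A_2,A_6\}$.

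The only real subtlety I anticipate lies in the bookkeeping for mixed pairs such as $\{A_2,A_6\}$ whose indices do not lie in $\{1,5\}$: a direct verification on $A_1|\psi\rangle$ or $A_1A_5|\psi\rangle$ requires first rewriting intermediate states back in the chosen spanning basis of $V$ via identities like $A_2|\psi\rangle = A_5|\psi\rangle$ from \eqref{MaxQ_const4} and $A_6|\psi\rangle = -A_1A_5|\psi\rangle$ coming from \eqref{MaxQ_const5} together with \eqref{MaxQ_const1}. The symmetry reduction outlined above sidesteps most of this bookkeeping, so I do not expect any new conceptual ingredient beyond what Lemmas~\ref{invsubV} and~\ref{lem_comm_inv} already provide to be necessary.
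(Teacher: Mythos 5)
Your proposal is correct and follows essentially the same route as the paper: the paper likewise verifies $\{A_1,A_5\}w=0$ on each of the four spanning vectors of $V$ using only $\{A_1,A_5\}\ket{\psi}=0$ and $A_k^2=\mathbbm{1}$ (your ``push past and collapse the square'' computation is exactly theirs), and then invokes the symmetries \eqref{symmetry1}--\eqref{symmetry4} to obtain the remaining five anti-commutators. The only cosmetic difference is that the paper never needs the commutation relations of Lemma~\ref{lem_comm_inv} or the bookkeeping you flag for mixed pairs, since the symmetry reduction makes the single explicit computation suffice.
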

\begin{proof}
It is enough to show explicitly that $\{ \hat{A}_1,\hat{A}_5 \} =0$, by demonstrating that the anti-commutator $\{A_1,A_5\}$ vanishes on the invariant subspace. First, in Eq. (\ref{preSym2q}), it has been
shown that $\{A_1,A_5\}\ket{\psi}=0$. Second, using the fact that $A_1A_5\ket{\psi}=-A_5A_1\ket{\psi}$, we have $A_1A_5A_1\ket{\psi}=-A_5\ket{\psi}$, on the other hand, $A_5A_1A_1\ket{\psi}=A_5\ket{\psi}$, which implies $\{A_1,A_5\} A_1 \ket{\psi}=0$.
Third, using  the fact that $A_1A_5\ket{\psi}=-A_5A_1\ket{\psi}$, we have $A_1A_5A_5\ket{\psi}=A_1\ket{\psi}$, on the other hand,
$A_5A_1A_5\ket{\psi}=-A_1\ket{\psi}$. Therefore, we get  $\{A_1,A_5\}A_5\ket{\psi}=0$. Finally, using the anticommutation relation $\{A_1,A_5\}\ket{\psi}=0$, we obtain $A_1A_5A_1A_5\ket{\psi}=-\ket{\psi}$ and $A_1A_5A_5A_1\ket{\psi}=\ket{\psi}$. Therefore, we have $\{A_1,A_5\}A_1A_5\ket{\psi}=0$. 

Using symmetries of the inequality (\ref{TncIneq}) in Eqs. \eqref{symmetry1}-\eqref{symmetry4}  as used to demonstrate Eq. (\ref{anticom}), it then follows that the other anti-commutators in Lemma. \ref{lemmaanticomm} can similarly be shown to vanish.
\end{proof}
%

%In Ref. \cite{SJA22}, a contextuality-based self-testing scheme was proposed to certify $n$-qubit quantum systems with $n \ge 3$.The proof of this scheme makes use of an invariant subspace as in Ref. \cite{IMO+20} and the dimension of the invariant subspace was inferred adopting an approach that has been used in non-locality-based self-testing schemes \cite{Kan16,KST+19,SSK+19,BAS+20}.
Having proven Lemma. \ref{lemmaanticomm}, we can infer that the dimension $d$ of the subspace
$V$ is an even number by using the techniques as adopted in Refs. \cite{SJA22,Kan16,KST+19,SSK+19,BAS+20}. 
Thus, we can write the dimension $d=2k$ for some $k\in\mathbbm{N}$, and thus $V=\mathbbm{C}^2\otimes\mathbbm{C}^k$. Moreover, since $\dim V\leq 4$,
one concludes that $k=1,2$.

\begin{lemma}\label{lemma3}
Suppose the maximal quantum violation of the inequality (\ref{TncIneq}) is observed. Then, there exists a unitary $U= U_1 (\mathbbm{1}_2 \otimes U_2)$ acting on $V$ such that 
\beq  \label{Obs2Ctilde}
U \hat A_1 U^\dagger &=& X \otimes \mathbbm{1}, \quad U\hat A_2U^\dagger =  \mathbbm{1} \otimes Z,  \nonumber  \\ U\hat A_3U^\dagger &=&  \pm X \otimes Z, \quad
U\hat A_4U^\dagger =  \mathbbm{1} \otimes X ,\nonumber  \\ U\hat A_5U^\dagger &=&  Z \otimes \mathbbm{1},  \quad U\hat A_6U^\dagger = \pm Z \otimes X.
\eeq
\end{lemma}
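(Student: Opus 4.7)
The plan is to use the commutation relations from Lemma \ref{lem_comm_inv}, the anti-commutation relations from Lemma \ref{lemmaanticomm}, and the involution property $\hat{A}_i^2=\mathbbm{1}$ to simultaneously diagonalize the six observables into the standard Pauli form, proceeding one qubit at a time. I rely on the fact, derived just before the statement, that $V\simeq\mathbb{C}^2\otimes\mathbb{C}^k$ with $k\in\{1,2\}$. First, $\hat{A}_1$ and $\hat{A}_5$ are Hermitian involutions that anti-commute, so they generate a finite-dimensional representation of the $2\times2$ Pauli algebra $M_2(\mathbb{C})$; every such representation is unitarily equivalent to a direct sum of copies of the defining one, which produces a unitary $U_1$ on $V$ with $U_1\hat{A}_1 U_1^\dagger=X\otimes\mathbbm{1}_k$ and $U_1\hat{A}_5 U_1^\dagger=Z\otimes\mathbbm{1}_k$ in the decomposition $V\simeq\mathbb{C}^2\otimes\mathbb{C}^k$.

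Next, by Lemma \ref{lem_comm_inv} both $\hat{A}_2$ and $\hat{A}_4$ commute with $\hat{A}_1$ and with $\hat{A}_5$, so after conjugation by $U_1$ they lie in the commutant of $M_2(\mathbb{C})\otimes\mathbbm{1}_k$, namely $\mathbbm{1}_2\otimes M_k(\mathbb{C})$. Hence $U_1\hat{A}_2 U_1^\dagger=\mathbbm{1}_2\otimes B_2$ and $U_1\hat{A}_4 U_1^\dagger=\mathbbm{1}_2\otimes B_4$ for Hermitian involutions $B_2,B_4$ on $\mathbb{C}^k$. The anti-commutation $\{\hat{A}_2,\hat{A}_4\}=0$ from Lemma \ref{lemmaanticomm} forces $\{B_2,B_4\}=0$, which cannot hold for $k=1$ and therefore pins $k=2$. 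Applying the same Pauli-algebra argument to $B_2,B_4$ yields a unitary $U_2$ on $\mathbb{C}^2$ with $U_2 B_2 U_2^\dagger=Z$ and $U_2 B_4 U_2^\dagger=X$; the composition $U=(\mathbbm{1}_2\otimes U_2)U_1$ then maps $\hat{A}_1,\hat{A}_2,\hat{A}_4,\hat{A}_5$ to the targets $X\otimes\mathbbm{1},\,\mathbbm{1}\otimes Z,\,\mathbbm{1}\otimes X,\,Z\otimes\mathbbm{1}$, respectively.

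Finally, I fix $\tilde{A}_3:=U\hat{A}_3 U^\dagger$ and $\tilde{A}_6:=U\hat{A}_6 U^\dagger$ by algebra. By Lemma \ref{lem_comm_inv} $\tilde{A}_3$ commutes with both $X\otimes\mathbbm{1}$ and $\mathbbm{1}\otimes Z$, so it lies in the abelian subalgebra spanned by $\{\mathbbm{1}\otimes\mathbbm{1},\,X\otimes\mathbbm{1},\,\mathbbm{1}\otimes Z,\,X\otimes Z\}$. Imposing $\{\tilde{A}_3,\mathbbm{1}\otimes X\}=0$ from Lemma \ref{lemmaanticomm} removes the first two summands, and $\{\tilde{A}_3,Z\otimes\mathbbm{1}\}=0$ removes the $\mathbbm{1}\otimes Z$ term, leaving $\tilde{A}_3=\delta\,X\otimes Z$; the involution condition $\tilde{A}_3^2=\mathbbm{1}$ then fixes $\delta=\pm1$. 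A verbatim argument using the commutation and anti-commutation pattern of $\hat{A}_6$ gives $\tilde{A}_6=\pm Z\otimes X$. The main obstacle I expect is the representation-theoretic step that brings the anti-commuting pair $\hat{A}_1,\hat{A}_5$ simultaneously into Pauli form and, together with the Schur-type tensor commutant computation, fixes the dimension $k=2$; once this block decomposition is in place, the remaining commutation/anti-commutation bookkeeping for the other observables is mechanical.
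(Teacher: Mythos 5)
Your proposal is correct and follows essentially the same route as the paper: fix $\hat A_1,\hat A_5$ via the anti-commuting-involutions representation argument, place $\hat A_2,\hat A_4$ in the commutant $\mathbbm{1}_2\otimes M_k(\mathbb{C})$, force $k=2$ from $\{B_2,B_4\}=0$ and normalize with a second unitary, then pin $\hat A_3,\hat A_6$ up to sign by the remaining (anti)commutation relations. The only cosmetic difference is that you phrase the block structure in commutant/representation-theoretic language and determine $\hat A_3$ from $[\hat A_3,\hat A_2]=0$ and $\{\hat A_3,\hat A_4\}=0$ where the paper first writes $\hat A_3=X\otimes N$ from $[\hat A_1,\hat A_3]=0$ and $\{\hat A_3,\hat A_5\}=0$; the content is the same.
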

\begin{proof}
First, from Lemma \ref{lemmaanticomm}, we have $\{\hat{A}_1,\hat A_5 \} = 0$ which implies that there exists a unitary $U_1$ acting on $V$ such that
\begin{eqnarray}\label{IdA1B1}
U_1^{\dagger}\, \hat{A}_1\, U_1 &=& X \otimes \mathbbm{1}_k, \\
U_1^{\dagger}\, \hat{A}_5\, U_1&=& Z \otimes \mathbbm{1}_k,
\end{eqnarray}
where, as already mentioned, the dimension $d$ of the subspace $V$ is given by $d=2k$ for some $k=1,2$. Using then the above form of $\hat{A}_1$ and $\hat{A}_5$ 
and the relations in Lemma \ref{lem_comm_inv} and \ref{lemmaanticomm} we can write the remaining operators as follows:
\beq
\label{form1}U^{\dag}_1\,\hat{A}_2\,U_1 &=& \mathbbm{1}_2 \otimes M, \\
U^{\dag}_1\,\hat{A}_3\,U_1 &=& X \otimes N, \\
U^{\dag}_1\,\hat{A}_4\,U_1 &=& \mathbbm{1}_2 \otimes O, \\
U^{\dag}_1\,\hat{A}_6\,U_1 &=& Z \otimes P, 
\eeq
where $M,N,O,P$ are Hermitian involutions acting on the subspace of dimension $k$. 
To show explicitly how the above equations are obtained let us focus on $\hat{A}_2$;
the proof for the other observable is basically the same. Since $\hat{A}_2$ acts on
$\mathbbm{C}^2\otimes\mathbbm{C}^k$, it can be decomposed in the Pauli basis as
\begin{equation}
    U_1^{\dagger}\,\hat{A}_2\,U_1=\mathbbm{1}_2\otimes M_1+X\otimes M_2+Y\otimes M_3+Z\otimes M_4,
\end{equation}
where $Y$ is the third Pauli matrix and $M_i$ are some Hermitian matrices acting on $\mathbbm{C}^k$. Now, it follows from the fact that $\hat{A}_2$ commutes with $\hat{A}_1$, that $M_3=M_4=0$. Then, from $[\hat{A}_2,\hat{A}_5]=0$, one obtains that
$M_2=0$, and, by putting $M_1=M$, we arrive at Eq. (\ref{form1}).

Second, from Lemma \eqref{lemmaanticomm}, we also have $\{ \hat{A}_2,\hat{A}_4\} = 0$ which is equivalent to $\{M,O\}=0$. Since both $M$ and $O$ are involutions, one concludes that $k=2$, or, equivalently, that $\mathbbm{C}^k=\mathbbm{C}^2$. In what follows, using $\{M,O\}=0$, we can fix $M=Z$ and $O=X$. Now, 
%from Lemma \ref{lem_comm_inv}, using $[\hat{A}_3,\hat{A}_6]=0$, we conclude that $N=Z$ and $P=X$. Then, 
from Lemma \ref{lem_comm_inv} and \ref{lemmaanticomm}, using $[\hat{A}_2,\hat{A}_3] = 0$ and $\{\hat{A}_3,\hat{A}_4\}= 0$, we find that $[N,Z]=0$ as well as $\{N,X\}=0$. This is possible if $N=\pm Z$. Similarly, using $\{ \hat{A}_2,\hat{A}_6\} = 0$ and $[\hat{A}_4,\hat{A}_6] = 0$, we find that $O=\pm X$. 
%Then the relation $\{M,O\}=0$ is also satisfied trivially. 
Therefore, there exists another unitary transformation $U_2:\mathbbm{C}^k\to\mathbbm{C}^k$ such that
\begin{eqnarray*}
    U_2^{\dagger}\,M\, U_2&=& Z,\quad 
    U_2^{\dagger}\,N\, U_2=\pm Z,\\
    U_2^{\dagger}\,O\, U_2&=& X,\quad
    U_2^{\dagger}\,P\, U_2=\pm X.
\end{eqnarray*}
Then, we find the relations in Eq. \eqref{Obs2Ctilde}.
%where to obtain the form of $A_3$ we use the fact that $[A_3,A_2]=\{A_3,A_4\}=0$ which implies that $[N,M]=\{N,O\}=0$. The only operator that satisfies both of these conditions is $N=Z$, and similarly, we get the form of $P$.
%
\end{proof}

We have thus arrived at the main result of this paper.

\begin{thm}\label{Theo2qubit}
If a quantum state $|\psi \ra$ and a set of measurements $A_i$  with $i\in [1,6]$ maximally violate the inequality \eqref{TncIneq}, then there exists a projection $P:\mathcal{H} \rightarrow V$ with $V=\mathbbm{C}^2 \otimes \mathbbm{C}^2 $ and a unitary $U$ acting on $V$ such that
\begin{align}\label{dupawolowa2q}
U^\dagger\, (P\hat A_1P^\dagger)\, U &= X \otimes \mathbbm{1}_2, \:\:  
U^\dagger\, (P\hat A_4P^\dagger) \, U =  \mathbbm{1}_2 \otimes X,  \nonumber \\
U^\dagger\, (P\hat A_2P^\dagger)\, U &=  \mathbbm{1}_2 \otimes Z, \:\:  
U^\dagger\, (P\hat A_5P^\dagger)\, U = Z \otimes \mathbbm{1}_2 ,  \nonumber \\  
U^\dagger\, (P\hat A_3P^\dagger)\, U &=  X \otimes Z,  \:\:  
U^\dagger\,(P\hat A_6P^\dagger)\,  U =  Z \otimes X ,
\end{align}
%
%and 
\vspace{-0.5cm}
\be
\text{and}\quad U (P|\psi \ra) = |\phi^+ \ra,
\ee
where 
%$\hat A_i=P\, A_i\, P^\dagger$ and 
$|\phi^+ \ra$ being the two-qubit maximally entangled state Eq. (\ref{2qmes}).
\end{thm}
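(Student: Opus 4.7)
The plan is to bundle Lemmas \ref{invsubV}--\ref{lemma3} together and then use the remaining maximal-violation constraints \eqref{MaxQ_const1}--\eqref{MaxQ_const5} to fix both the residual sign ambiguity left by Lemma \ref{lemma3} and the state itself. Lemma \ref{invsubV} supplies the four-dimensional invariant subspace $V$ together with the projection $P:\mathcal{H}\to V$; since $A_i'|\psi\rangle=0$ by the block structure \eqref{block}, $|\psi\rangle$ already lies in $V$ and $P|\psi\rangle=|\psi\rangle$. Lemma \ref{lemma3} then produces a unitary $U$ on $V\cong\mathbbm{C}^2\otimes\mathbbm{C}^2$ bringing $\hat{A}_i = PA_iP$ into the canonical Pauli form of \eqref{Ai}, up to two undetermined signs $s_3,s_6\in\{\pm 1\}$ in front of $X\otimes Z$ and $Z\otimes X$.

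The first step is to show that both of these signs equal $+1$. Writing $|\phi\rangle := U|\psi\rangle$ and conjugating the three-operator constraint \eqref{MaxQ_const1} by $U$ gives
\begin{equation*}
s_3\,(X\otimes\mathbbm{1})(\mathbbm{1}\otimes Z)(X\otimes Z)\,|\phi\rangle = |\phi\rangle,
\end{equation*}
and since the product of the three Pauli operators on the left equals $\mathbbm{1}\otimes\mathbbm{1}$, one immediately reads off $s_3=1$. The identical argument applied to \eqref{MaxQ_const2} gives $s_6=1$, so the conjugated observables take the exact form stated in \eqref{dupawolowa2q}. The second step is to identify $|\phi\rangle$. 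Conjugating \eqref{MaxQ_const3} and \eqref{MaxQ_const4} by $U$ yields $(X\otimes X)|\phi\rangle=|\phi\rangle$ and $(Z\otimes Z)|\phi\rangle=|\phi\rangle$; these two commuting independent Pauli stabilizers have a one-dimensional joint $+1$-eigenspace, spanned by $|\phi^+\rangle$, so $|\phi\rangle=|\phi^+\rangle$ up to an overall phase that can be absorbed into $U$. The remaining relation \eqref{MaxQ_const5} is then automatic from $(X\otimes Z)(Z\otimes X)=-(Y\otimes Y)$ together with $(Y\otimes Y)|\phi^+\rangle=-|\phi^+\rangle$, which serves as a useful internal consistency check.

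The main subtlety I anticipate is the order of operations: one must pin down $s_3,s_6$ \emph{before} invoking the stabilizer argument, because the only unitaries on $\mathbbm{C}^2\otimes\mathbbm{C}^2$ that simultaneously preserve $\hat{A}_1,\hat{A}_2,\hat{A}_4,\hat{A}_5$ are scalars, and so cannot retroactively flip the signs in front of $\hat{A}_3,\hat{A}_6$. Using the three-operator constraints \eqref{MaxQ_const1}--\eqref{MaxQ_const2} to fix the signs first sidesteps this issue cleanly, and the identification of the state then reduces to a standard stabilizer calculation. Beyond this, the proof is a compact assembly of the earlier lemmas.
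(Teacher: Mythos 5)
Your proposal is correct and follows essentially the same route as the paper's proof: assemble Lemmas \ref{invsubV}--\ref{lemma3}, fix the residual signs of $\hat A_3$ and $\hat A_6$ via the triple-product constraints \eqref{MaxQ_const1}--\eqref{MaxQ_const2}, and then read off the state from the stabilizer conditions \eqref{MaxQ_const3}--\eqref{MaxQ_const4}; your extra remark on why the signs must be fixed before the stabilizer step is a worthwhile clarification. One small slip in your optional consistency check: $(X\otimes Z)(Z\otimes X)=(-\mathrm{i}Y)\otimes(\mathrm{i}Y)=+\,Y\otimes Y$, so the check reads $A_3A_6\ket{\phi^+}=(Y\otimes Y)\ket{\phi^+}=-\ket{\phi^+}$, consistent with \eqref{MaxQ_const5} without the extra minus sign you inserted.
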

\begin{proof}
A quantum state $\ket{\psi}$ that belongs to a Hilbert space $\mathcal{H}$ and a set of observables $A_i$ acting on $\mathcal{H}$ attain the maximal quantum violation of the inequality \eqref{TncIneq} if and only if they satisfy the set of Eqs. \eqref{MaxQ_const1}-\eqref{MaxQ_const5}. The algebraic relations induced by this set of equations let us prove Lemmas \ref{invsubV}-\ref{lemma3} which imply that there exists a projection $P:\mathcal{H} \rightarrow V \cong \mathbbm{C}^4 $ and a unitary $U = U_1 (\mathbbm{1}_2 \otimes U_2)$ acting on $V \cong \mathbbm{C}^4$ for which Eqs. (\ref{Obs2Ctilde})
hold true. Now, using stabilizer condition Eq. \eqref{MaxQ_const1}, we find that the sign in front of $U^\dagger \hat A_3 U$ can only be $+1$. Similarly, using Eq. \eqref{MaxQ_const2}, we can fix the sign of $U^\dagger \hat A_6 U$ to be $+1$.

From the above characterization of the observables, we can infer the form of the state $\ket{\psi}$. Indeed, after plugging  $A_i=\hat{A}_i \oplus A_i $ with $\hat{A}_i$ as given by Eq. \eqref{dupawolowa2q},  into the conditions (\ref{MaxQ_const3}) and (\ref{MaxQ_const4}) one realizes that the latter is simply the stabilizing conditions of the two-qubit maximally entangled state and thus $U (P\ket{\psi})=\ket{\phi^+}$. This completes the proof.
\end{proof}
Given this self-testing statement, we will now show in the following theorem that the above self-testing protocol is robust to very small errors ($\epsilon$) which might come from noise or experimental imperfections. A rigorous proof of the following fact has been furnished in Appendix \ref{robustnessproof}.
\begin{thm}\label{robust-main}
    Suppose a quantum state $\ket{\psi}$ and a set of measurements $A_i$ with $i\in [1,6]$ in $\mathcal{H}$, provide a non-maximal violation $5-\epsilon$ of the inequality \eqref{TncIneq}. Then, there exists a projection $P:\mathcal{H} \rightarrow V$ (with $V=\mathbbm{C}^2 \otimes \mathbbm{C}^2$) and a unitary $U$ acting on $V$ such that the measurements $\hat{A}_i$ and the state $\ket{\hat{\psi}}$ acting on $V$, satisfies the following relations,
    \begin{align}\label{robustness_statement}
        &\lVert U^{\dagger}\hat{A}_i^{noisy}U-\hat{A}_i^{opt}\rVert\leq m_1\sqrt{\epsilon}+m_2\epsilon^{1/4} \nonumber\\
        & |\braket{\hat{\psi}|\phi^{+}}|^2\geq 1-\left(s_1\epsilon+s_2\epsilon^{3/4}+s_3\sqrt{\epsilon}\right),
    \end{align}
    where $m_1,m_2,s_1,s_2,s_3$ are positive constants and $\hat{A}^{opt}$ are the optimal measurements obtained in Eq. \eqref{dupawolowa2q}.
\end{thm}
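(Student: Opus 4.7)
My plan is to robustify each algebraic step of the proof of Theorem~\ref{Theo2qubit}, carefully tracking how the errors compound. Starting from $\mathcal{I}_T\ge 5-\epsilon$, since every correlator appearing in \eqref{TncIneq} lies in $[-1,1]$ and the maximum of their signed sum is exactly $5$, each individual term must saturate its extremal value up to an additive $O(\epsilon)$. Applied to a Hermitian operator $M$ with $\|M|\psi\rangle\|\le 1$, the elementary bound $\langle\psi|M|\psi\rangle\ge 1-\delta$ yields $\|(M\mp\mathbbm{1})|\psi\rangle\|\le\sqrt{2\delta}$, and I would use it to produce approximate versions of the stabiliser identities \eqref{MaxQ_const1}--\eqref{MaxQ_const5}, namely $\|A_iA_jA_k|\psi\rangle\mp|\psi\rangle\|\le c\sqrt{\epsilon}$ and $\|A_iA_j|\psi\rangle\mp|\psi\rangle\|\le c\sqrt{\epsilon}$ with the appropriate signs. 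Because each $A_i$ is an isometry, applying $A_i$ on both sides of these bounds immediately produces the analogous approximate relations for the permutations used in the exact proof.

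Next, I would replay the algebraic derivations of \eqref{anticom} and of Lemmas~\ref{lem_comm_inv}--\ref{lemmaanticomm}, inserting the approximate stabilisers and the triangle inequality at every chaining step. The chains are of bounded length and every $A_i$ is unitary, so the resulting errors remain $O(\sqrt{\epsilon})$: the state-dependent anti-commutators satisfy $\|\{A_i,A_j\}|\psi\rangle\|=O(\sqrt{\epsilon})$ for the pairs listed in \eqref{anticom}, and the commutators $[A_i,A_j]$ of Lemma~\ref{lem_comm_inv} act as $O(\sqrt{\epsilon})$ on each of the four spanning vectors of $V$. Consequently $V$ is an approximately invariant subspace, the leakage of $A_iV$ into $V^\perp$ being again $O(\sqrt{\epsilon})$, and the truncated operators $\hat A_i^{\mathrm{noisy}}:=PA_iP^\dagger$ obey the Peres--Mermin commutation/anti-commutation algebra on $V$ up to $O(\sqrt{\epsilon})$ in operator norm restricted to $V$. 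I would then invoke a robust Jordan-type lemma in the spirit of Refs.~\cite{SJA22,Kan16,KST+19,SSK+19,BAS+20}: two Hermitian involutions whose anti-commutator has operator norm $O(\sqrt{\epsilon})$ can be simultaneously rotated by a single unitary $U_1$ so as to be close to $X\otimes\mathbbm{1}_2$ and $Z\otimes\mathbbm{1}_2$ with error $O(\epsilon^{1/4})$; this $\epsilon^{1/4}$ scaling is the standard cost of converting an approximate algebraic relation into a unitary equivalence and is the origin of the $m_2\,\epsilon^{1/4}$ term in \eqref{robustness_statement}. Expanding each remaining $\hat A_i^{\mathrm{noisy}}$ in the Pauli basis on the first factor and propagating the approximate relations with $\hat A_1^{\mathrm{noisy}}$ and $\hat A_5^{\mathrm{noisy}}$, a second unitary $U_2$ on $\mathbbm{C}^2$ brings $\hat A_{2,3,4,6}^{\mathrm{noisy}}$ close to their ideal forms, combining into the claimed operator-norm bound $m_1\sqrt{\epsilon}+m_2\epsilon^{1/4}$.

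To obtain the state bound, I would substitute $A_i=\hat A_i^{\mathrm{noisy}}\oplus A_i'$ with the approximate canonical $\hat A_i^{\mathrm{opt}}$ into the approximate stabilisers $A_1A_4|\psi\rangle\approx|\psi\rangle$ and $A_2A_5|\psi\rangle\approx|\psi\rangle$; after conjugation by $U(P\,\cdot\,)$ these become approximate stabiliser conditions for $X\otimes X$ and $Z\otimes Z$ acting on $|\hat\psi\rangle:=U(P|\psi\rangle)$, with combined error involving $\sqrt{\epsilon}$ from the stabiliser identities and $\epsilon^{1/4}$ from the measurement approximation. Since $|\phi^+\rangle$ is the unique joint $+1$ eigenstate of these two commuting observables, a direct fidelity estimate then yields $|\langle\hat\psi|\phi^+\rangle|^2\ge 1-(s_1\epsilon+s_2\epsilon^{3/4}+s_3\sqrt{\epsilon})$, the mixed powers arising naturally from products of the various error terms. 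The principal technical obstacle is the robust Jordan-type step on an only approximately invariant subspace: one has to verify that the projection $P$ and the unitaries $U_1,U_2$ do not amplify the $O(\sqrt{\epsilon})$ algebraic errors beyond the claimed $\epsilon^{1/4}$ bound, and that the interplay between the measurement-error and state-error chains yields exactly the exponents $1$, $3/4$, $1/2$ appearing in \eqref{robustness_statement} rather than any worse scaling.
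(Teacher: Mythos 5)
Your proposal is correct and follows essentially the same route as the paper's Appendix~\ref{robustnessproof}: approximate stabiliser relations of order $\sqrt{\epsilon}$ from the $5-\epsilon$ violation, propagation through the exact proof's algebra via triangle inequalities to get $O(\sqrt{\epsilon})$ (anti)commutator bounds and an approximately invariant $V$, an almost-commuting/robust-Jordan step (the paper uses Lin's theorem on the blocks $M_y,M_z$ of $\hat{A}_5$) as the source of the $\epsilon^{1/4}$ term, and a fidelity bound from approximate stabilisers of $\ket{\phi^+}$. The only cosmetic differences are that the paper also tracks the $-Y\otimes Y$ stabiliser (which in fact gives the binding constant) and notes that the truncated $\hat{A}_i$ for $i\neq 1$ are no longer exact involutions, but neither point changes your argument.
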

\section{Conclusions and Outlook} \label{Conc}
%\textit{Conclusions and Outlook.} 
In order to demonstrate Kochen-Specker contextuality via non-contextuality inequalities, we need to assume certain compatibility relations between the measurements by defining "contexts". On the other hand, with temporal 
inequalities, we do not need to assume any compatibility conditions on the quantum measurements. In Ref. \cite{IMO+20}, a self-testing protocol based on observing measurement statistics that imply Kochen-Specker contextuality was proposed. This protocol certifies a two-qubit maximally entangled state and a subset of two-qubit Pauli measurements for realizing a topological quantum computer.
In this work, we proposed a self-testing protocol based on a temporal inequality that certifies this two-qubit entangled state and measurements. Thus, our scheme provides an alternative certification than the approach based on Kochen-Specker contextuality as it does not assume those compatibility conditions.

As a future direction, it would be interesting to find a self-testing scheme using temporal inequalities which will work for general measurements. In our work, we need to assume that measurements are projective if we drop the compatibility assumption. Note that Ref. \cite{DMS+22} assumes that measurement device produces maximally mixed state as well as same measurement performed many times produces a particular outcome with certainty always in order to drop compatibility assumptions. Alternative, it was suggested that one can assume a full rank input state to the measurement device instead of a maximally mixed state in order to remove the compatibility assumptions \cite{2023arXiv230701333S}. Therefore, we are lacking a universal self-testing scheme where such assumptions will be lifted. Also, it would be useful to find temporal inequalities with which we can reproduce self-testing statements for other existing non-contextuality inequalities. In another work, we extend our protocol to certify $n$-qubit quantum states and measurements with $n \ge 3$ \cite{nqubit_temp}.  

%\section*{acknowledgments}
\textit{Acknowledgements.}  This work is supported by the Polish National Science Centre through the SONATA BIS project No. 2019/34/E/ST2/00369. SS acknowledges funding through PASIFIC program call 2 (Agreement No. PAN.BFB.S.BDN.460.022 with the Polish Academy of Sciences). This project has received funding from the European Union’s Horizon 2020 research
and innovation programme under the Marie Skłodowska-Curie grant agreement No 847639 and from the Ministry of Education and Science of Poland.

\bibliography{bibliography}
\onecolumngrid
\appendix
\section{Robustness Analysis} \label{robustnessproof}
In this section we show that our certification scheme is robust to the experimental errors and imperfections, i.e., we can certify the state and measurements within a threshold fidelity close to the ideal state and measurements, even when the inequality \eqref{TncIneq} is violated non-maximally.  Let us assume that a maximal violation of the inequality \eqref{TncIneq} is obtained with an $\epsilon$ error, i.e., a non-maximal violation of $5-\epsilon$ is observed. We will now demonstrate that for $\epsilon\rightarrow 0$, the quantum realization is very close to the optimal quantum realization that gives the maximal bound. For a non-maximal violation of $5-\epsilon$ it is implied that the sequential correlations satisfy the following bounds:
\begin{align}\label{seqcorrfidelity}
    &\braket{A_1A_2A_3}_{\pi} \geq 1-2\epsilon, \qquad
    \braket{A_2A_1A_3}_{\pi} \geq 1-2\epsilon, \nonumber \\
    &\braket{A_4A_5A_6}_{\pi} \geq 1-2\epsilon, \qquad 
    \braket{A_5A_4A_6}_{\pi} \geq 1-2\epsilon, \nonumber \\
    &\braket{A_1A_4}_{\pi} \geq 1-\epsilon, \quad 
    \braket{A_2A_5}_{\pi} \geq 1-\epsilon, 
    \quad -\braket{A_3A_6}_{\pi} \geq 1-\epsilon,
\end{align}
for some $\epsilon>0$. From these error-prone correlations we will do all the steps that went into the self-testing in the error-free scenario, i.e., we will deduce the commutation relations, anti-commutation relations, invariant subspace and the states and measurements within a small error bound. The sequential correlations in \eqref{seqcorrfidelity} further bound the quantum expectation values in the following way 
%\textbf{(Remik: There should be absolute values below or Hermitian operators)}:
%
\begin{align}
    \braket{(A_1A_2A_3+A_3A_2A_1)} \geq 2(1-4\epsilon) \quad \forall \quad \text{permutations}, \label{expfid1} \\
    \braket{(A_4A_5A_6+A_6A_5A_4)} \geq 2(1-4\epsilon) \quad \forall \quad \text{permutations}. \label{expfid2}
    %\\
    %\braket{A_1A_4} \geq 1-2\epsilon\quad \forall \quad \text{permutations},  \label{expfid3}\\
    %\braket{A_2A_5} \geq 1-2\epsilon,\quad \forall \quad  \text{permutations}, \label{expfid4}\\
  % -\braket{A_3A_6} \geq 1-2\epsilon. \quad \forall \quad\text{permutations}.\label{expfid5}
\end{align}
It can be seen that these relations are an approximate version of the optimal relations in \eqref{MaxQ_const1}-\eqref{MaxQ_const5}. Equipped with these relations we now present the following lemma that will be useful for further proofs. 
\begin{lemma}
Suppose the relations in \eqref{seqcorrfidelity}-\eqref{expfid2} are satisfied for some $\epsilon>0$. Then the following bounds hold true
\begin{align}
    &\lVert(A_1-A_2A_3)\ket{\psi}\rVert\leq 4\sqrt{\epsilon} \quad \forall \quad \text{permutations}, \label{srcbound1} \\
    &\lVert(A_4-A_5A_6)\ket{\psi}\rVert\leq 4\sqrt{\epsilon} \quad \forall \quad \text{permutations}, \label{srcbound2}\\
    &\lVert(A_1-A_4)\ket{\psi}\rVert\leq 2\sqrt{\epsilon},\label{srcbound3}\\
    &\lVert(A_2-A_5)\ket{\psi}\rVert\leq 2\sqrt{\epsilon},\label{srcbound4}\\
    &\lVert(A_3+A_6)\ket{\psi}\rVert\leq 2\sqrt{\epsilon}.\label{srcbound5}
\end{align}
\end{lemma}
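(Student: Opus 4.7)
The plan is to invoke the elementary identity $\lVert(X-Y)\ket{\psi}\rVert^2 = 2 - 2\,\mathrm{Re}\la\psi|X^{\dagger}Y|\psi\ra$, valid whenever $X^{\dagger}X = Y^{\dagger}Y = \mathbbm{1}$. This applies verbatim to every norm in the lemma, because each $A_i$ is a Hermitian involution, so any product $A_jA_k$ is a unitary whose self-adjoint-product is $\mathbbm{1}$.

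For the two-observable bounds \eqref{srcbound3}--\eqref{srcbound5}, the definition \eqref{dupablada} gives $\braket{A_iA_j}_{\pi}=\mathrm{Re}\la\psi|A_iA_j|\psi\ra$. The hypotheses in \eqref{seqcorrfidelity} then translate directly: $\lVert(A_1-A_4)\ket{\psi}\rVert^2 = 2 - 2\braket{A_1A_4}_{\pi} \leq 2\epsilon$, and similarly for $(A_2,A_5)$, while the sign-flipped pair satisfies $\lVert(A_3+A_6)\ket{\psi}\rVert^2 = 2 + 2\braket{A_3A_6}_{\pi} \leq 2\epsilon$. Taking square roots and relaxing $\sqrt{2\epsilon}$ to $2\sqrt{\epsilon}$ gives the claimed bounds.

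For the three-observable bounds \eqref{srcbound1}--\eqref{srcbound2}, take $X=A_i$ and $Y=A_jA_k$, so $\lVert(A_i-A_jA_k)\ket{\psi}\rVert^2 = 2 - \la\psi|A_iA_jA_k+A_kA_jA_i|\psi\ra \leq 8\epsilon$ by \eqref{expfid1}, giving $\lVert\cdot\rVert\leq 2\sqrt{2\epsilon}\leq 4\sqrt{\epsilon}$. To cover ``all permutations,'' I would unpack \eqref{dupablada} into $\braket{A_iA_jA_k}_{\pi}=\frac{1}{2}\bigl[\mathrm{Re}\la\psi|A_iA_jA_k|\psi\ra+\mathrm{Re}\la\psi|A_iA_kA_j|\psi\ra\bigr]$. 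Each of the two sequential-correlation hypotheses in \eqref{seqcorrfidelity} thus controls two distinct real parts at once, and the adjoint identity $(A_iA_jA_k)^{\dagger}=A_kA_jA_i$ identifies the remaining three orderings (out of six) with ones already controlled. Combined with the trivial upper bound of $1$ on each real part, a sum exceeding $2-4\epsilon$ of two such terms forces each individually to exceed $1-4\epsilon$; this covers every ordering of $(1,2,3)$ and, by identical reasoning, of $(4,5,6)$.

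The main subtlety here is organizational rather than analytic: one must verify that the two available sequential correlations really do suffice, via adjoint identities and the anticommutator symmetrization built into \eqref{dupablada}, to pin down every relevant real part appearing in the permutation clause. Once that bookkeeping is in place, each bound of the lemma reduces to a single line of norm-squared expansion followed by direct substitution of either \eqref{seqcorrfidelity} or \eqref{expfid1}--\eqref{expfid2}.
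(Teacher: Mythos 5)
Your proposal is correct and follows essentially the same route as the paper: expand each squared norm using the involution property $A_i^2=\mathbbm{1}$ to reduce it to $2$ minus a symmetrized expectation value, then substitute the hypotheses \eqref{seqcorrfidelity}--\eqref{expfid2} and relax $\sqrt{2\epsilon}$ (resp.\ $2\sqrt{2\epsilon}$) to $2\sqrt{\epsilon}$ (resp.\ $4\sqrt{\epsilon}$). Your extra bookkeeping on how the two sequential correlations plus the adjoint identity $(A_iA_jA_k)^{\dagger}=A_kA_jA_i$ cover all six orderings is a correct and welcome elaboration of the permutation clause that the paper dispatches with ``in a similar way.''
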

\begin{proof}
We note the following
\begin{align*}
    \lVert(A_1-A_2A_3)\ket{\psi}\rVert= \sqrt{2-\braket{(A_1A_2A_3+A_3A_2A_1)}}\leq 4\sqrt{\epsilon},
\end{align*}
where we simplify the norm by using the fact that $A_i$ are unitary measurements and then we substitute \eqref{expfid1} to get the inequality.  In the noise-free scenario, this reduces to $A_1\ket{\psi}=A_2A_3\ket{\psi}$. In a similar way, we can obtain other relations. 
\end{proof}
Using these relations one can obtain the following error bounds on the anti-commutators in Eq. \eqref{anticom}. 
\begin{align}\label{apprxanticom}
    \|\{A_i,A_j\}\ket{\psi}\|\leq 14\sqrt{\epsilon}.
\end{align}
The error bound can be obtained by following the proof of Eq. \eqref{anticom} and replacing each step of "equality" in the proof with a triangle inequality and then using the bounds in \eqref{srcbound1}-\eqref{srcbound5}. For instance in the anti-commutation proof of $\{A_1,A_5\}\ket{\psi}=0$ in Eq. \eqref{preSym2q}, we do the following 
\begin{align}\label{apprxpreSym2q}
    \|(A_1A_5+A_5A_1)\ket{\psi}\| &\leq \lVert(A_1A_5-A_1A_2)\ket{\psi}\rVert+\lVert(A_1A_2-A_3)\ket{\psi}\rVert\nonumber+\lVert(A_3+A_6)\ket{\psi}\rVert\nonumber \\&+\lVert(-A_6+A_5A_4)\ket{\psi}\rVert+\lVert(-A_5A_4+A_5A_1)\ket{\psi}\rVert  \leq 14 \sqrt{\epsilon},
\end{align}
where we have used a chain of triangle inequalities for the vector norm, the error bounds from \eqref{srcbound1}-\eqref{srcbound5} and the fact that the vector norm is unitarily invariant. In the same way, we can get error bounds on the other anti-commutators from Eq.\eqref{anticom}.
\textit{Action of $A_i$ on subspace $V$}.-- Next, we will show that the invariant subspace $V := \mathrm{span} \{|\psi\ra , A_1 |\psi\ra, A_5 |\psi\ra, A_1A_5 |\psi\ra \}$ considered in \eqref{defInv} is now approximately invariant under the action of operators $A_i$ for $i \in \{1,\ldots,6\}$. We find that the operator $A_1$ keeps the space $V$ invariant. For $A_5$, the non-trivial transformations are: 1. $A_5A_1\ket{\psi}\approx-A_1A_5\ket{\psi}+14\sqrt{\epsilon}\ket{\xi}$, and $A_5A_1A_5\ket{\psi}\approx -A_1\ket{\psi}+14\sqrt{\epsilon}\ket{\xi}$ from Eq. \eqref{apprxpreSym2q}. Therefore, $A_5$ keeps the vectors from $V$ approximately invariant. Note that vector $\ket{\xi}$ is arbitrary, moreover, we can assume w.l.o.g. that $\ket{\xi}$ lives in a subspace orthogonal to $V$. 

Similarly, one can show that other operators keep the subspace invariant approximately. For example, the operator $A_2$ acting on $V$ changes it's vectors to $A_2\ket{\psi}\approx A_5\ket{\psi}+2\sqrt{\epsilon}\ket{\xi}$ and $A_2A_5\ket{\psi}\approx A_2^2\ket{\psi}+2\sqrt{\epsilon}A_2\ket{\xi}$ using Eq. \eqref{srcbound4}. Now, the vector $A_2A_1\ket{\psi}\approx A_1A_2\ket{\psi}+8\sqrt{\epsilon}\ket{\xi}$ due to the relation that $\|A_1A_2-A_2A_1\|\leq \|A_1A_2-A_3\|+\|A_3-A_2A_1\|\leq 8\sqrt{\epsilon}$ (using \eqref{srcbound1}). Then, it implies using Eq. \eqref{srcbound4} that $A_2A_1\ket{\psi}\approx A_1A_5\ket{\psi}+2\sqrt{\epsilon}(1+4 A_1)\ket{\xi}$. The last vector $A_2A_1A_5\ket{\psi} \approx A_2A_1A_2\ket{\psi}+2\sqrt{\epsilon}A_2A_1\ket{\xi}$ first, then $A_2A_1A_2\ket{\psi}\approx A_2^2A_1\ket{\psi}+8\sqrt{\epsilon}A_2\ket{\xi}$. Therefore, $A_2$ keeps $V$ invariant approximately. We tabulate all such transformation in the following Table \ref{tab:my_label}.
\begin{table}[H]
 \centering
\begin{tabular}{ ||c||c|c|c|c| } 
 \hline
 $V$ & $|\psi\ra$ & $A_1 |\psi\ra$ & $ A_5 |\psi\ra$ & $A_1A_5 |\psi\ra$\\ 
 \hline
 $A_1 V$ & $A_1 |\psi\ra$ & $|\psi\ra$ & $A_1A_5 |\psi\ra$ & $ A_5 |\psi\ra$\\ 
 \hline
 $A_2 V$ & $A_5\ket{\psi}+2\sqrt{\epsilon}\ket{\xi}$ & $A_1A_5\ket{\psi}+2\sqrt{\epsilon}\Theta_1\ket{\xi}$ & $\ket{\psi}+2\sqrt{\epsilon}A_2\ket{\xi}$ & $A_1\ket{\psi}+2\sqrt{\epsilon}\Theta_2\ket{\xi}$\\ 
 \hline
 $A_3 V$ & $A_1A_5\ket{\psi}+2\sqrt{\epsilon}\chi\ket{\xi}$ & $A_5\ket{\psi}+6\sqrt{\epsilon}\ket{\xi}$ & $A_1\ket{\psi}+2\sqrt{\epsilon}(2+A_3)\ket{\xi}$ & $\ket{\psi}+2\sqrt{\epsilon}A_3\chi\ket{\xi}$\\ 
 \hline
 $A_4 V$ & $A_1\ket{\psi}+2\sqrt{\epsilon}\ket{\xi}$ & $\ket{\psi}+2\sqrt{\epsilon}A_4\ket{\xi}$ & $-A_1A_5\ket{\psi}+2\sqrt{\epsilon}\Theta_5\ket{\xi}$ & $-A_5\ket{\psi}+2\sqrt{\epsilon}\Theta_3\ket{\xi}$\\ 
 \hline
 $A_5 V$ & $A_5\ket{\psi}$ & $-A_1A_5\ket{\psi}+14\sqrt{\epsilon}\ket{\xi}$ & $\ket{\psi}$ & $-A_1\ket{\psi}+14\sqrt{\epsilon}\ket{\xi}$\\ 
 \hline
 $A_6 V$ & $-A_1A_5\ket{\psi}+2\sqrt{\epsilon}\Theta_4\ket{\xi}$ & $A_5\ket{\psi}+2\sqrt{\epsilon}(2+A_6)\ket{\xi}$ & $A_1\ket{\psi}+6\sqrt{\epsilon}\ket{\xi}$ & $-\ket{\psi}+2\sqrt{\epsilon}A_6\Theta_4\ket{\xi}$\\ 
 \hline
\end{tabular}
    \caption{\textcolor{blue}{$A_iV\to V+f(\epsilon)$}.-- The table describes how the vectors spanning $V$ transforms under the action of $A_i$ in non-ideal scenario. Here, we denote $\chi=2+A_1$, $\Theta_1=1+4 A_1$, $\Theta_2=A_2(1+\Theta_4)$, $\Theta_3=2+A_4\Theta_4$, $\Theta_4=1+\chi$, and $\Theta_5=3+\chi$.}
    \label{tab:my_label}
\end{table}
From Table \ref{tab:my_label}, we see that indeed action of $A_i$ keeps $V$ invariant approximately. However, notice that the vectors inside $V$ are not necessarily orthogonal. Therefore, it is not guaranteed that $A_i$ acting on a arbitrary state from $V$ will keep it approximately within $V$. To proceed, we need to show the following 
\begin{align}
    \max_{\ket{\psi}_1,\ket{\psi}_2\in V}\|A_i\ket{\psi}_1-\ket{\psi}_2\|\leq f(\epsilon),
\end{align}
where $f(\epsilon)$ is some well-behaved function of $\epsilon$. This will ensure that the action of $A_i$ on $V$ will be approximately invariant. For this purpose, we state and prove the following lemma. Importantly, we can remove maximization using the Gram-Schmidt orthogonalization procedure.
\begin{lemma}\label{apprxsubspace}
Under the action of operators $A_i$ for $i\in \{1,\ldots,6\}$, the subspace $V := \mathrm{span} \{|\psi\ra , A_1 |\psi\ra, A_5 |\psi\ra, A_1A_5 |\psi\ra \},$ changes as following
\begin{align}
    \forall \ket{\psi}_1 \in V, \quad \exists \ket{\psi}_k\in V, \quad s.t. \quad \lVert A_i\ket{\psi}_1-\ket{\psi}_k\rVert\leq 2C_{i}\sqrt{\epsilon} , 
\end{align}
where $C_{i}$ with $i \in [1,6]$ are 
%function of overlaps $\{\la e_i|\phi_j\ra\}$, and $K$ are 
real, finite, and positive constants; and $k\in [1,5]$.
\end{lemma}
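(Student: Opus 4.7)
The plan is to lift the pointwise invariance recorded in Table \ref{tab:my_label} to the whole subspace $V$ through a standard linearization: pick an orthonormal basis of $V$, expand an arbitrary unit vector in this basis, apply $A_i$ termwise, and collect the residual errors using the triangle inequality.

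First I would construct an orthonormal basis $\{\ket{e_j}\}_{j=1}^{4}$ of $V$ by running Gram--Schmidt on the spanning set $\{\ket{\psi}, A_1\ket{\psi}, A_5\ket{\psi}, A_1A_5\ket{\psi}\}$. What makes this well-behaved is that in the ideal limit the spanning set is already orthonormal: the off-diagonal inner products reduce to expectations such as $\bra{\phi^+}X\otimes\mathbbm{1}\ket{\phi^+}$, $\bra{\phi^+}Z\otimes\mathbbm{1}\ket{\phi^+}$, and $\bra{\phi^+}XZ\otimes\mathbbm{1}\ket{\phi^+}$, all of which vanish. In the noisy case each such inner product can be rewritten using the relations \eqref{srcbound1}-\eqref{srcbound5}; for instance $\bra{\psi}A_1\ket{\psi}$ equals $\bra{\psi}A_2A_3\ket{\psi}+O(\sqrt{\epsilon})$ which after one more substitution equals $\bra{\psi}A_5A_6\ket{\psi}+O(\sqrt{\epsilon}) = -\bra{\psi}A_3A_3\ket{\psi}+O(\sqrt{\epsilon})=O(\sqrt{\epsilon})$. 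Hence the Gram matrix of the spanning set is within $O(\sqrt{\epsilon})$ of the identity, and a Neumann-series argument produces coefficients $|G_{jk}|$ in $\ket{e_j}=\sum_k G_{jk}\ket{v_k}$ bounded uniformly by a constant $K$ depending only on a fixed threshold $\epsilon_0$.

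With the basis in hand, write $\ket{\psi}_1=\sum_j d_j\ket{e_j}$ with $\sum_j|d_j|^2=1$, so $\sum_j|d_j|\le 2$. Each entry of Table \ref{tab:my_label} expresses $A_i\ket{v_k}=\ket{w_{ik}}+\sqrt{\epsilon}\,M_{ik}\ket{\xi}$ with $\ket{w_{ik}}\in V$ and $M_{ik}$ a polynomial in the involutions $A_1,\ldots,A_6$; since each $A_j$ has unit norm, the $M_{ik}$ (read off from the tabulated $\Theta_1,\ldots,\Theta_5$ and $\chi$) have norms bounded by a small universal constant $\alpha$. Linearity then gives $A_i\ket{e_j}=\ket{u_{ij}}+\sqrt{\epsilon}\ket{\chi_{ij}}$ with $\ket{u_{ij}}\in V$ and $\|\ket{\chi_{ij}}\|\le 4K\alpha$; setting $\ket{\psi}_k:=\sum_j d_j\ket{u_{ij}}\in V$ the triangle inequality produces
\begin{equation*}
\bigl\|A_i\ket{\psi}_1-\ket{\psi}_k\bigr\|\le\sqrt{\epsilon}\sum_j|d_j|\,\|\ket{\chi_{ij}}\|\le 8K\alpha\sqrt{\epsilon},
\end{equation*}
which is of the claimed form $2C_i\sqrt{\epsilon}$ with $C_i=4K\alpha$.

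The main obstacle I anticipate is the first step: quantifying uniformly in $\epsilon$ (below a fixed threshold $\epsilon_0$) how close the noisy Gram matrix is to the identity and, in particular, guaranteeing that it remains invertible so that $V$ is genuinely $4$-dimensional. Everything else is cosmetic: the operator norms $\|M_{ik}\|$ can be read off by inspecting the $\Theta_k$'s and the final constants $C_i$ follow mechanically from $K$ and $\alpha$. A convenient by-product is that once $K$ is controlled, the apparent freedom in the choice of $\ket{\psi}_k\in V$ is absorbed into the construction, and the Gram--Schmidt procedure provides an explicit $\epsilon$-independent target basis that will be useful in the subsequent robustness steps.
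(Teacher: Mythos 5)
Your proposal follows essentially the same route as the paper's proof: Gram--Schmidt orthogonalization of the spanning set $\{\ket{\psi},A_1\ket{\psi},A_5\ket{\psi},A_1A_5\ket{\psi}\}$ via the Gram matrix, termwise application of $A_i$ using the tabulated approximate actions with errors proportional to $\sqrt{\epsilon}\,\ket{\xi}$, and a triangle-inequality collection of the residues into the constants $C_i$. The one obstacle you flag --- uniform control of the inverse Gram matrix (your constant $K$) --- is likewise left implicit in the paper, which simply absorbs the entries of $\Gamma^{-1/2}$ into the definition of $C_i$ without verifying invertibility, so your version is, if anything, slightly more careful on this point.
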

\begin{proof}
%
%Before going into the proof of lemma, we notice that the subspace $V={\rm span}\{|\psi\ra , A_1 |\psi\ra, A_5 |\psi\ra, A_1A_5 |\psi\ra\}$ contains vectors that are not necessarily orthogonal. 
Lets denote the vectors that spans the subspace $V$ as $\ket{e_1}=|\psi\ra$, $\ket{e_2}=A_1|\psi\ra$, $\ket{e_3}=A_5|\psi\ra$, and , $\ket{e_4}=A_1A_5|\psi\ra$. And define the Gram-matrix ($\Gamma$) with the elements $\Gamma_{mn}=\bra{e_m}e_n\rangle$. Then, by applying the Gram-Schmidt (GS) orthogonalization procedure one finds an orthogonal basis that spans the subspace $V$. We denote this basis as $\{\ket{\phi_m}|m=1,\ldots,4\}$, where $\ket{\phi_m}$ are expressed in terms of the initial vectors $\ket{e_m}$ as \cite{HornJohnson}
\begin{align}\label{GS-basis}
 % \ket{\phi_1}=\ket{e_1},\quad   \ket{\phi_2}=\frac{1}{\gamma_1}(\ket{e_2}-\la A_1\ra\ket{\phi_1}),\quad
 %\ket{\phi_3}=\frac{1}{\gamma_2}(\ket{e_3}-\la A_5\ra\ket{\phi_1}-\eta_1\ket{\phi_2}),\nonumber\\
 % \ket{\phi_4}=\frac{1}{\gamma_3}(\ket{e_4}-\la A_1A_5 \ra\ket{\phi_1}-\eta_2 \ket{\phi_2}-\eta_3\ket{\phi_3}),
 \ket{\phi_m}=\sum_n \left[\Gamma^{-1/2}\right]_{nm} \ket{e_n}.
\end{align}
%
%where 
%$\mathbbm{1}$ is an identity matrix from subspace $V$, 
%$\gamma_i$s are the normalization constants and the 
%$\gamma_1^2\eta=\la A_5^\dagger A_1^\dagger A_5\ra-\la A_5^\dagger\ra^2$. Also, in the above we use the fact that action of $A_1$ trivially gives $V\to V$, which gives $\la A_1\ra=1$.
%$\eta_i$ are defined below
%\begin{align*}
 %   \eta_1=\frac{1}{\gamma_1}\la e_2|\phi_2\ra, \:\: \eta_2=\frac{1}{\gamma_1}\la e_4|\phi_2\ra,\:\:\text{and}\quad \eta_3=\frac{1}{\gamma_2}\la e_4|\phi_3\ra.
%\end{align*}
Now, we notice that the Gram-matrix $\Gamma$ is invariant when we have following transformation:
\begin{itemize}
    \item $\ket{e_1}\to X_2\ket{e_1}=A_1\ket{\psi}$, $\ket{e_2}\to X_2\ket{e_2}=\ket{\psi}$, $\ket{e_3}\to X_2\ket{e_3}=A_1A_5\ket{\psi}$, and , $\ket{e_4}\to X_2\ket{e_4}=A_5\ket{\psi}$.
    \item $\ket{e_1}\to X_3\ket{e_1}=A_5\ket{\psi}$, $\ket{e_2}\to X_3\ket{e_2}=A_1A_5\ket{\psi}$, $\ket{e_3}\to X_3\ket{e_3}=\ket{\psi}$, and , $\ket{e_4}\to X_3\ket{e_4}=A_1\ket{\psi}$.
    \item $\ket{e_1}\to X_4\ket{e_1}=A_1A_5\ket{\psi}$, $\ket{e_2}\to X_4\ket{e_2}=A_5\ket{\psi}$, $\ket{e_3}\to X_4\ket{e_3}=A_1\ket{\psi}$, and , $\ket{e_4}\to X_4\ket{e_4}=\ket{\psi}$,
\end{itemize}
where the symbolic operators, $X_2, X_3, X_4$, mimic the action of $A_1, A_5$ and $A_1A_5$ respectively. Therefore, we have another three GS basis (with same Gram-matrix $\Gamma$) which also spans $V$. It is evident from Table \ref{tab:my_label}, we only need these four bases, $\{X_k\ket{e_n}\}\in V$ for our analysis below. 
Now, any vector $\ket{\psi}_k\in V$ , $\forall k\in[1,4]$ can be expressed in therms of these orthogonal bases as 
\begin{align}\label{Eq-A14}
    \ket{\psi}_k=\sum_m\alpha_{m}X_{k}\ket{\phi_m}=\sum_m\alpha_{m}\sum_n\left[\Gamma^{-1/2}\right]_{nm}X_{k}\ket{e_n},
\end{align}
where we denote $X_1=\mathbbm{1}$ and $\alpha_{m}$ are complex coefficients satisfying $\sum_i|\alpha_{m}|^2=1$. 
Notice here that the action of $X_k$ on the set of vectors $\ket{e_n}\in V$ in the Eq. \eqref{Eq-A14} is listed above. 
%We provide a justification in Sub-section \ref{justifi} for our choice.
%Notice that above orthogonalization is not the only one, there exists $4!$ of equivalent GS set depending on the choice of $\ket{e_i}$ from vectors $\{|\psi\ra , A_1 |\psi\ra, A_5 |\psi\ra, A_1A_5 |\psi\ra\}$. W.l.o.g, we can consider Eq. \eqref{GS-basis} as the canonical one.

%\textcolor{red}{Then we prove that under the action of $A_i$'s the subspace $V$ remains invariant approximately after the orthogonalization. If we consider action of $A_1$ on $\ket{\phi_i}$ from Eq. \eqref{GS-basis}, we find that the basis transform as 
%\begin{align*}
 % \ket{\phi_1}=\ket{e_2},\quad   \ket{\phi_2}=\frac{1}{\gamma_1}(\ket{e_1}-\la A_1\ra\ket{\phi_1}),\quad
% \ket{\phi_3}=\frac{1}{\gamma_2}(\ket{e_4}-\la A_5\ra\ket{\phi_1}-\eta_1\ket{\phi_2}),\\
 % \ket{\phi_4}=\frac{1}{\gamma_3}(\ket{e_3}-\la A_1A_5 \ra\ket{\phi_1}-\eta_2 %\ket{\phi_2}-\eta_3\ket{\phi_3}),
%\end{align*}
%i.e., we recover equivalent GS orthogonalization of the vectors $\{\ket{e_i}\}$. Therefore, action of $A_1$ keeps the subspace invariant. Similarly, any arbitrary vector $\ket{\psi}_2\in V$ can be written in terms of above basis, as $\ket{\psi}_2=\sum_i\alpha_i\ket{\phi_i}$ (w.l.o.g.). 
Equipped with above relations, we are now ready to prove the lemma. First we prove that $A_1\ket{\psi}_1=\ket{\psi}_2$ below, 
\begin{align*}
    A_1\ket{\psi}_1=A_1\left(\sum_m\alpha_{m}\sum_n\left[\Gamma^{-1/2}\right]_{nm}\ket{e_n}\right)
    =\sum_m\alpha_{m}\sum_n\left[\Gamma^{-1/2}\right]_{nm}X_2\ket{e_n}=\ket{\psi}_2, 
\end{align*}
where we use Eq. \eqref{Eq-A14}. Now, consider the action of $A_2$ on $V$, then using Table \ref{tab:my_label}
%\textcolor{red}{We show that actions of other $A_i$'s will yield one of the equivalent GS orthogonalizations approximately. For example, consider action of $A_2$ on $\ket{\phi_i}$ from Eq. \eqref{GS-basis}, we have\begin{align*}
%  \ket{\Tilde{\phi}_1}\approx \ket{e_3}+2\sqrt{\epsilon}\ket{\xi}= \ket{\phi_1}+2\sqrt{\epsilon}\ket{\xi},\quad   \ket{\Tilde{\phi}_2}\approx\ket{\phi_2}+\frac{2}{\gamma_1}\sqrt{\epsilon}(1-\la A_1\ra+4 A_1)\ket{\xi},
% \\
% \ket{\Tilde{\phi}_3}\approx\ket{\phi_3}+\frac{2}{\gamma_2}\sqrt{\epsilon}(A_2-\la A_5\ra-\eta_1(1-\la A_1\ra+4 A_1))\ket{\xi},
%\\
%\ket{\Tilde{\phi}_4}\approx\ket{\phi_4}+\frac{2}{\gamma_3}\sqrt{\epsilon}(A_2(4+A_1)-\la A_1A_5 \ra-\eta_2 (1-\la A_1\ra+4 A_1)-\eta_3(A_2-\la A_5\ra-\eta_1(1-\la A_1\ra+4 A_1))\ket{\xi},
%\end{align*}
%where we define the new $\ket{\phi_i}$ as
%\begin{align*}
%  \ket{\phi_2}=\frac{1}{\gamma_1}(\ket{e_4}-\la A_1\ra\ket{\phi_1}), \quad  \ket{\phi_3}=\frac{1}{\gamma_2}(\ket{e_1}-\la A_5\ra\ket{\phi_1}-\eta_1\ket{\phi_2}),\\
%  \ket{\phi_4}=\frac{1}{\gamma_3}(\ket{e_2}-\la A_1A_5 \ra\ket{\phi_1}-\eta_2 \ket{\phi_2}-\eta_3\ket{\phi_3}).
%\end{align*}
%Again we can see that we recover one of the equivalent GS basis approximately. Now a arbitrary state $\ket{\psi}_1\in V$ transforms under $A_2$ approximately as $\ket{\psi}_3+2\sqrt{\epsilon}f(A_i,\la A_i\ra,\gamma_i,\eta_i)\ket{\xi}$, where $\ket{\psi}_3=\sum_i\alpha_i\ket{\phi_i}$ (w.l.o.g.) and $f(\cdot)$ is a well behaved function. Therefore, 
\begin{align*}
     A_2\ket{\psi}_1&=\sum_m\alpha_{m}\sum_n\left[\Gamma^{-1/2}\right]_{nm}A_2\ket{e_n}\approx \sum_m\alpha_{m}\sum_n\left[\Gamma^{-1/2}\right]_{nm}\{X_3\ket{e_n}+2\sqrt{\epsilon}f_n(A_2)\ket{\xi}\},\\
     &=\ket{\psi}_3+2\sqrt{\epsilon} \sum_m\alpha_{m}\sum_n\left[\Gamma^{-1/2}\right]_{nm}f_n(A_2)\ket{\xi},
  %  \|A_2\ket{\psi}_1-\ket{\psi}_3\|\leq \sum_i|\alpha_i|\| \ket{\Tilde{\phi}_i}-\ket{\phi_i}\|\leq \sum_i\| \ket{\Tilde{\phi}_i}-\ket{\phi_i}\|,
\end{align*}
where $f_1(A_2)=1, f_2(A_2)=1+4A_1, f_3(A_2)=A_2,f_4(A_2)=A_2(4+A_1)$ using Table \ref{tab:my_label} and we use Eq. \eqref{Eq-A14} in the second line. Note that in the above equation, we find that the Gram-matrix remains invariant also under the action of $\{A_i|i\in[2,6]\}$ as $\langle e_n\ket{\xi}=0$.
%using $|\alpha_i|\leq 1$. 
By rearranging the terms, we find the below relation  
\begin{align*}
    \|A_2\ket{\psi}_1-\ket{\psi}_3\|\leq 2\sqrt{\epsilon}\sum_m\Big\|\sum_n\left[\Gamma^{-1/2}\right]_{nm}f_n(A_2)\ket{\xi}\Big\|\leq 2\sqrt{\epsilon} C_{2},
\end{align*}
where we used $|\alpha_m|\leq 1$ and identified the real and positive constant $C_{2}$ as
\begin{align*}
    C_{2}=\sum_m\Big\|\sum_n\left[\Gamma^{-1/2}\right]_{nm}f_n(A_2)\ket{\xi}\Big\|.
\end{align*}
Notice that it is straight forward to find the expression of $C_{2}$ by opening the summation over $n$. In this way a lot of terms will be simplified by noticing that $A_i^2=\mathbbm{1}$.

Similarly, we can prove that all the operators $A_i$ will keep $V$ approximately invariant. In the end, we get the following relations for rest of the operators for their actions on $V$
\begin{align}\label{robustsubspacepreservation}
%%%%%%& \forall \ket{\psi}_1,\ket{\phi}_2 \in V \quad \:\:\:\:\:\:\:\:\:\: \:\:\:\:\:\:\: A_1\ket{\psi}_1=\ket{\psi}_2, \nonumber \\
    &  \forall \ket{\psi}_1 \in V \quad \exists \ket{\psi}_4 \in V, \quad s.t. \quad \lVert A_3\ket{\psi}_1-\ket{\psi}_4\rVert\leq 2\sqrt{\epsilon} C_{3}, 
    \nonumber \\
    &  \forall \ket{\psi}_1 \in V \quad \exists \ket{\psi}_2 \in V, \quad s.t. \quad \lVert A_4\ket{\psi}_1-\ket{\psi}_2\rVert\leq 2\sqrt{\epsilon} C_{4}, 
    \nonumber \\
    &  \forall \ket{\psi}_1 \in V \quad \exists \ket{\psi}_3 \in V, \quad s.t. \quad \lVert A_5\ket{\psi}_1-\ket{\psi}_3\rVert\leq 2\sqrt{\epsilon} C_{5}, 
    \nonumber \\
    &  \forall \ket{\psi}_1 \in V \quad \exists \ket{\psi}_4 \in V, \quad s.t. \quad \lVert A_6\ket{\psi}_1-\ket{\psi}_4\rVert\leq 2\sqrt{\epsilon} C_{6}, 
    %&  \forall \ket{\psi} \in V \quad \exists \phi, \quad s.t. \quad \lVert A_6\ket{\psi}-\ket{\phi}\rVert\leq 8 C_{\gamma}\sqrt{\epsilon}.
\end{align}
where $C_i$ with $i\in \{3,..,6\}$, are real, finite, and positive constants.
%, and $\ket{\psi}_5=-\ket{\psi}_4$ (Notice that the operator $A_6$ has an overall $(-1)$ phase attached to it. This is consistent with the approximate relation \eqref{srcbound5}.). 
Moreover, the constants $C_i$ can in general be estimated by the following relation,
\begin{align}
     C_{i}=\sum_m\Big\|\sum_n\left[\Gamma^{-1/2}\right]_{nm}f_n(A_i)\ket{\xi}\Big\|,
\end{align}
where the expressions of $f_n(A_i)$ can be found from the Table \ref{tab:my_label}. Hence, we prove the lemma.
\end{proof}

Due to Lemma \ref{apprxsubspace}, the operators $A_i$ are approximately restricted to the subspace $V$, i.e., we can write them as $A_i= \begin{pmatrix} \hat{A}_i & B^{\dagger}\\ B & A_i' \end{pmatrix}$, where similar to the error free scenario $\hat{A}_i=PA_iP$ with $P:\mathcal{H}\rightarrow V$ projects onto $V$, $A_i'$ acts trivially on $V$ and the correlation terms between $V$ and $V^{\perp}$, represented with $B$ are of the order of $2C_{i}\sqrt{\epsilon}$. Since the action of $A_i'$ on $V$ is trivial and $B$ on $V$ is of the order of $2C_{i}\sqrt{\epsilon}$, we can conclude that the contribution to the non-maximal violation comes mostly from the operators $\hat{A}_i$. For the remaining proof, we can limit ourselves to $\hat{A}_i$ only. 
Furthermore, due to \eqref{robustsubspacepreservation} the all the 
 eigenvalues $a_i$ of operators $A_i$ respectively, are bounded as following 
\begin{align}\label{robusteigen}
    |a_1|=1, \quad
    1-2C_{i}\sqrt{\epsilon}\leq|a_i|\leq 1.
    %\quad
     %1-6C_{\gamma}\sqrt{\epsilon}\leq|a_3|\leq 1, \nonumber \\
     %1-2C_{\gamma}\sqrt{\epsilon}\leq|a_4|\leq 1,\quad
    % 1-7C_{\gamma}\sqrt{\epsilon}\leq|a_5|\leq 1,\quad
     %1-8C_{\gamma}\sqrt{\epsilon}\leq|a_6|\leq 1.
\end{align}
Note that since $A_1$ keeps $V$ invariant, we have the unitary condition $\hat{A}_1^2=\mathbbm{1}$, whereas all the remaining hatted operators are not unitary anymore.
For robust self-testing, we also need to show that the commutators and anti-commutators from Lemma \ref{lem_comm_inv} and \ref{lemmaanticomm}, hold within a small error bound proportional to $\epsilon$. We derive these bounds in the next two lemmas. The derivations of the error bounds on the commutators and anti-commutators are obtained by following the proofs in the error-free scenario and replacing the "equalities" with the relevant triangle inequalities.
\begin{lemma}\label{apprxcommlemma}
Suppose a non-maximal quantum violation $5-\epsilon$ of the inequality \eqref{TncIneq} is observed. Then the following error bounds on the commutation relations can be obtained. 
\begin{align*}
    &\{\lVert[\hat{A}_1,\hat{A}_2]\rVert,\:\: \lVert[\hat{A}_1,\hat{A}_3]\rVert, \:\:  \lVert[\hat{A}_2,\hat{A}_3]\rVert\}\leq 4\left(C_{2}+C_{3}\right)\sqrt{\epsilon},\\ 
    &\{\lVert[\hat{A}_4,\hat{A}_5]\rVert,\:\:\lVert[\hat{A}_4,\hat{A}_6]\rVert,\:\: \lVert[\hat{A}_5,\hat{A}_6]\rVert\}\leq 4\left(C_{4}+C_{5}+C_{6}\right)\sqrt{\epsilon},\\
    &\lVert[\hat{A}_1,\hat{A}_4]\leq 4C_{4}\sqrt{\epsilon},\:\:\rVert\lVert[\hat{A}_2,\hat{A}_5]\rVert\leq 4(C_{2}+C_{5})\sqrt{\epsilon},\:\:\lVert[\hat{A}_3,\hat{A}_6]\rVert\leq 4(C_{3}+C_{6})\sqrt{\epsilon}.
\end{align*}
%where for the notation convenience, we denote $||X\ket{\psi}||$ sometime by $||X||$, or vice-versa. 
\end{lemma}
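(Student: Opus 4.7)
\medskip
\noindent\textbf{Proof plan.} The strategy is to follow the error-free derivation of Lemma~\ref{lem_comm_inv} verbatim, replacing each algebraic equality by a triangle inequality fed by the approximate stabilizer identities \eqref{srcbound1}--\eqref{srcbound5}. The work splits naturally into three parts: (i) estimate the action of each commutator $[A_i,A_j]$ on the four spanning vectors $|\psi\rangle$, $A_1|\psi\rangle$, $A_5|\psi\rangle$, $A_1A_5|\psi\rangle$ of $V$; (ii) lift the pointwise bounds to the operator norm on $V$ using the approximate invariance of Lemma~\ref{apprxsubspace}; and (iii) propagate the result to the remaining commutators via the symmetries \eqref{symmetry1}--\eqref{symmetry4} of $\mathcal I_T$.

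\medskip
For step~(i) consider $[A_1,A_2]$ as a representative case. Inserting $A_3|\psi\rangle$ as an intermediate vector, one writes
\begin{equation*}
\|[A_1,A_2]|\psi\rangle\|\leq \|(A_1A_2-A_3)|\psi\rangle\|+\|(A_3-A_2A_1)|\psi\rangle\|\leq 8\sqrt{\epsilon},
\end{equation*}
with both contributions controlled by \eqref{srcbound1}. The action on $A_1|\psi\rangle$ is handled the same way using $A_1^2=\mathbbm{1}$. For the vectors $A_5|\psi\rangle$ and $A_1A_5|\psi\rangle$ the chain is slightly longer because one must first turn the outer $A_5$ into $A_2$ using \eqref{srcbound4} before applying \eqref{srcbound1}; unitary invariance of the vector norm keeps each such substitution additive in $\sqrt{\epsilon}$. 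The analysis for $[A_1,A_4]$ proceeds along the same lines, now centred on \eqref{srcbound3} and on the three-deep identities from the error-free proof (which bring in \eqref{srcbound2} and \eqref{srcbound5}); this is what is responsible for the prefactor in front of $C_4$ in the claimed bound.

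\medskip
For step~(ii), the passage from ``commutator small on each $|e_n\rangle$'' to ``$\|[\hat A_i,\hat A_j]\|$ small'' relies on the Gram--Schmidt expansion already set up inside the proof of Lemma~\ref{apprxsubspace}. Any unit $|\phi\rangle\in V$ can be written as $|\phi\rangle=\sum_{m,n}\alpha_m[\Gamma^{-1/2}]_{nm}|e_n\rangle$, so $[A_i,A_j]|\phi\rangle$ becomes a finite linear combination of $[A_i,A_j]|e_n\rangle$ and the triangle inequality transports the pointwise estimates to $|\phi\rangle$ with a factor that depends only on $\Gamma^{-1/2}$. To pass from $[A_i,A_j]$ to $[\hat A_i,\hat A_j]=P[A_i,A_j]P$ one must in addition insert and remove the projector $P$ between the two factors; each such insertion is controlled by Lemma~\ref{apprxsubspace} and contributes a further $2C_i\sqrt{\epsilon}$ or $2C_j\sqrt{\epsilon}$ term. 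Adding these contributions is exactly what produces the $C_i+C_j$ combinations appearing in the statement.

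\medskip
Step~(iii) is then essentially symbolic: each symmetry in \eqref{symmetry1}--\eqref{symmetry4} permutes the observables without altering either the structure of \eqref{srcbound1}--\eqref{srcbound5} or the constants $C_i$ governing the approximate invariance of $V$, so relabelling the indices in the argument above yields the remaining five commutator bounds. The main obstacle I anticipate is not conceptual but one of book-keeping: the derivations for $[A_1,A_4]$, $[A_2,A_5]$ and $[A_3,A_6]$ involve three-step chains in the error-free proof, which now expand into chains of up to six triangle inequalities, and one has to verify that no super-linear blow-up occurs in the prefactors and that the combined constants collapse to precisely $4C_4$, $4(C_2+C_5)$ and $4(C_3+C_6)$, rather than merely to some $O(\sqrt\epsilon)$ quantity.
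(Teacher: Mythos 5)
Your overall outline (triangle inequalities, approximate invariance of $V$, then symmetries) matches what the appendix announces in general terms, but the paper's actual proof of this lemma is structurally different, and the difference is what makes the stated constants come out. The paper does \emph{not} re-run the error-free chains of Lemma~\ref{lem_comm_inv}, and it does \emph{not} first bound the commutators on the four spanning vectors and then lift via $\Gamma^{-1/2}$. It works from the outset with an arbitrary $\ket{\psi}_1\in V$ and uses only Lemma~\ref{apprxsubspace}: every application of $A_j$ to a vector of $V$ lands within $2C_j\sqrt{\epsilon}$ of a prescribed element of $V$ (the Gram-matrix factor is already absorbed into the definition of $C_j$), and because the ideal images of $A_iA_j\ket{\psi}_1$ and $A_jA_i\ket{\psi}_1$ coincide, one gets, e.g., $\lVert(A_1A_2-A_3)\ket{\psi}_1\rVert\leq 2(C_2+C_3)\sqrt{\epsilon}$ in one step, and the commutator bound is just the sum of two such per-application errors. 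That is precisely why the constants collapse to $4(C_2+C_3)$, $4C_4$, and so on.

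Your route has two concrete problems relative to the lemma as stated. First, for $[A_1,A_4]$ you propose to mimic the error-free derivation, which runs through $A_5$, $A_6$, $A_3$, $A_2$ and the approximate stabilizer relations \eqref{srcbound2} and \eqref{srcbound5}; that chain accumulates far more than $4C_4\sqrt{\epsilon}$ worth of error. The paper instead uses only that $A_4\ket{\psi}_1$ is $2C_4\sqrt{\epsilon}$-close to the exact image $A_1\ket{\psi}_1$ of $\ket{\psi}_1$ under the ideal action, together with the fact that $A_1$ preserves $V$ exactly, which yields $4C_4\sqrt{\epsilon}$ with no contribution from $C_1,\dots$ at all. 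Second, your step (ii) introduces an explicit $\lVert\Gamma^{-1/2}\rVert$-dependent amplification when passing from the spanning vectors to general elements of $V$, plus additional $2C_i\sqrt{\epsilon}$ terms for each projector insertion; neither appears in the stated bounds because the paper never separates these stages. You correctly flag the risk that the constants might not collapse --- under your plan they indeed do not. The argument would still deliver an $O(\sqrt{\epsilon})$ bound, so the qualitative conclusion survives, but to obtain the lemma as written you should replace steps (i)--(ii) by a direct application of Lemma~\ref{apprxsubspace} to arbitrary vectors of $V$.
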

\begin{proof}
We will show the proof for error bound on $[\hat{A}_1,\hat{A}_2]$, and the remaining bounds are obtained in a similar way. To get a bound on $\lVert[\hat{A}_1,\hat{A}_2]\rVert$, we will prove that $\lVert[A_1,A_2]\rVert$ is bounded by $4(C_{2}+C_{3})\sqrt{\epsilon}$ in arbitrary vectors from subspace $V$.
Consider the vector $\ket{\psi}_1\in V$, we have 
\begin{align*}
    \lVert[A_1,A_2]\ket{\psi}_1\rVert \leq \lVert(A_1A_2-A_3)\ket{\psi}_1\rVert+\lVert(A_3-A_2A_1)\ket{\psi}_1\rVert, 
    %\\
   % \leq \sum_i\{\lVert(A_1A_2-A_3)\ket{\phi_i}\rVert+\lVert(A_3-A_2A_1)\ket{\phi_i}\rVert\},
\end{align*}
where we use the triangle inequality.
%and then the bound from \eqref{srcbound1} with $|\alpha_i|\leq 1$. 
Now, consider the term $\sum_i\lVert(A_1A_2-A_3)\ket{\psi}_1\rVert$ and evaluate it,
\begin{align*}
    \lVert(A_1A_2-A_3)\ket{\psi}_1\rVert\leq \lVert A_1A_2\ket{\psi}_1-\ket{\psi}_4\rVert+\lVert\ket{\psi}_4-A_3\ket{\psi}_1\rVert,
\end{align*}
where $\ket{\psi}_4 \in V$. According to Lemma \ref{apprxsubspace}, $\lVert \ket{\psi}_4-A_3\ket{\psi}_1\rVert\leq 2\sqrt{\epsilon}C_{3}$. We need to evaluate the first term above. Further notice that 
\begin{align*}
   \lVert A_1A_2\ket{\psi}_1-\ket{\psi}_4\rVert\leq \lVert A_1A_2\ket{\psi}_1-A_1\ket{\psi}_3\rVert+\lVert A_1\ket{\psi}_3-\ket{\psi}_4\rVert,
\end{align*} 
where $\ket{\psi}_3\in V$. 
%Hence, we understand how subspace $V$ changes under $A_1A_2$. Such transformations can be found using previous calculations and we find that 
%\begin{align*}
  %  A_1A_2\ket{\psi}\approx A_1A_5\ket{\psi}+2\sqrt{\epsilon}(3+A_1)\ket{\xi}, \quad A_1A_2A_1\ket{\psi}\approx A_5\ket{\psi}+2\sqrt{\epsilon}(4+A_1)\ket{\xi},\\
 %    A_1A_2A_5\ket{\psi}\approx A_1\ket{\psi}+2\sqrt{\epsilon}A_1A_2\ket{\xi}, \quad A_1A_2A_1A_5\ket{\psi}\approx \ket{\psi}+2\sqrt{\epsilon}(4A_1A_2+A_1A_2A_1)\ket{\xi}.
%\end{align*}
%Therefore, we see that under the action of $A_1A_2$ the subspace $V$ is approximately invariant. 
Using the Lemma \ref{apprxsubspace}, we conclude that $\lVert A_1A_2\ket{\psi}_1-A_1\ket{\psi}_3\rVert \leq 2\sqrt{\epsilon}C_{2}$. Note that the relation $A_1\ket{\psi}_3=\ket{\psi}_4$ holds trivially. Therefore, we reach to the following expression after collecting all such contributions,
\begin{align*}
    \lVert[A_1,A_2]\ket{\psi}_1\rVert \leq 4 \sqrt{\epsilon}\left(C_{2}+C_{3}\right).
\end{align*}
%where we have used the fact that $C_{A_1A_2}=C_{A_2A_1}$. 
Similarly, we show for all the commutators. This proves our claim.
%\begin{itemize}
  %  \item $\lVert[A_1,A_2]\ket{\psi}\rVert\leq \lVert(A_1A_2-A_3)\ket{\psi}\rVert+\lVert(A_3-A_2A_1)\ket{\psi}\rVert \leq 8 C_{\gamma}\sqrt{\epsilon}$,  where we first use the triangle inequality and then the bound from \eqref{srcbound1}.
%     \item $\lVert[A_1,A_2]A_1\ket{\psi} \rVert\leq \lVert(A_1A_2A_1-A_1A_1A_2)\ket{\psi}\rVert+\lVert(A_1A_1A_2-A_2A_1A_1)\ket{\psi}\rVert \leq 8 C_{\gamma}\sqrt{\epsilon}$, where we simply use the triangle inequality and the fact that $\lVert([A_1,A_2]\ket{\psi})\rVert\leq 8 C_{\gamma}\sqrt{\epsilon}$.
%     \item $\lVert [A_1,A_2]A_5\ket{\psi} \rVert\leq \lVert(A_1A_2A_5-A_1A_2A_2)\ket{\psi}\rVert+\lVert(A_1A_2A_2-A_2A_1A_2)\ket{\psi}\rVert +\lVert(A_2A_1A_2-A_2A_1A_5)\ket{\psi}\rVert\leq 4 \sqrt{\epsilon}+\lVert(A_1A_2A_2-A_2A_2A_1)\ket{\psi}\rVert+\lVert(A_2A_2A_1-A_2A_1A_2)\ket{\psi}\rVert\leq 12C_{\gamma}\sqrt{\epsilon}$, where we use triangle inequality and then the relation \eqref{srcbound4}.
%     \item $\lVert[A_1,A_2]A_1A_5\ket{\psi}\rVert=\lVert (A_1A_2A_1A_5-A_2A_5)\ket{\psi}\rVert\leq \lVert (A_1A_2A_1A_5-A_1A_2A_1A_2)\ket{\psi}\rVert+\lVert (A_1A_2A_1A_2-A_1A_2A_2A_1)\ket{\psi}\rVert+\lVert (A_1A_2A_2A_1-A_1A_2A_2A_1)\ket{\psi}\rVert\leq 10C_{\gamma}\sqrt{\epsilon}$, where we use the triangle inequality and the relation \eqref{srcbound4} again. 
%\end{itemize}
%Thus for all the elements of $V$, the inequality $\lVert[A_1,A_2]\rVert\leq 12C_{\gamma}\sqrt{\epsilon}$ is satisfied. 
\end{proof}
Next, we find the error bound for the anti-commutation relations in the following lemma.
\begin{lemma}\label{apprxanticommlemma}
Suppose a non-maximal quantum violation $5-\epsilon$ of the inequality \eqref{TncIneq} is observed. Then the following error bounds on the anti-commutation relations can be obtained.
\begin{align*}
    \{\lVert\{\hat{A}_1,\hat{A}_5\}\rVert,\:\:\lVert\{\hat{A}_2,\hat{A}_4\}\rVert,\:\:\lVert\{\hat{A}_3,\hat{A}_4\}\rVert,\:\: \lVert\{\hat{A}_3,\hat{A}_5\}\rVert
    ,\:\:\lVert\{\hat{A}_1,\hat{A}_6\}\rVert,\:\:\lVert\{\hat{A}_2,\hat{A}_6\}\rVert\}\leq 4C_{\gamma}\sqrt{\epsilon},
\end{align*}
where $C_\gamma=\sum_{i=2}^6C_i$ is also a finite constant.
\end{lemma}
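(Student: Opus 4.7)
The plan is to adapt the strategy of Lemma \ref{apprxcommlemma} to anti-commutators. I focus first on $\{\hat{A}_1,\hat{A}_5\}$ and then leverage the symmetries (\ref{symmetry1})--(\ref{symmetry4}) of the inequality to handle the other five anti-commutators. The central task is to bound $\|\{A_1,A_5\}\ket{\psi}_1\|$ for an arbitrary unit vector $\ket{\psi}_1\in V$, because the operator $\{\hat{A}_1,\hat{A}_5\}=P\{A_1,A_5\}P$ acts non-trivially only on $V$, and its operator norm is the supremum of such quantities.

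The strategy is to extend the exact chain from Eq.~(\ref{preSym2q})—which establishes $A_1A_5\ket{\psi}=-A_5A_1\ket{\psi}$—from the special vector $\ket{\psi}$ to a general vector $\ket{\psi}_1\in V$. Concretely, I write the five-term triangle inequality modelled on Eq.~(\ref{apprxpreSym2q}),
\begin{align*}
\|(A_1A_5+A_5A_1)\ket{\psi}_1\| &\leq \|(A_1A_5-A_1A_2)\ket{\psi}_1\|+\|(A_1A_2-A_3)\ket{\psi}_1\|\\
&\quad+\|(A_3+A_6)\ket{\psi}_1\|+\|(A_6-A_5A_4)\ket{\psi}_1\|\\
&\quad+\|(A_5A_4-A_5A_1)\ket{\psi}_1\|.
\end{align*}
For each summand I do not appeal directly to Eqs.~(\ref{srcbound1})--(\ref{srcbound5}) (which only bound the action on $\ket{\psi}$), but instead mimic the proof of Lemma \ref{apprxcommlemma}: I insert an intermediate vector $\ket{\psi}_k\in V$ supplied by Lemma \ref{apprxsubspace} and use two applications of that lemma. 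For instance, $\|(A_1A_2-A_3)\ket{\psi}_1\|$ is split as $\|A_1A_2\ket{\psi}_1-\ket{\psi}_k\|+\|\ket{\psi}_k-A_3\ket{\psi}_1\|$, where $\ket{\psi}_k$ is chosen so that $A_2\ket{\psi}_1$ lies $2\sqrt{\epsilon}C_2$-close to some $\ket{\psi}_j\in V$ with $A_1\ket{\psi}_j=\ket{\psi}_k$ exactly (since $A_1$ preserves $V$), and simultaneously $A_3\ket{\psi}_1$ lies $2\sqrt{\epsilon}C_3$-close to the same $\ket{\psi}_k$. The consistency of these two paths is dictated by Table \ref{tab:my_label}, whose entries were engineered from the noise-free relations used in Eq.~(\ref{preSym2q}).

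Summing the five terms along the chain produces a bound proportional to a sum of $C_i$ over the operators $A_i$ that appear, namely $i\in\{2,3,4,5,6\}$; after collecting the numerical factors, this matches $4C_\gamma\sqrt{\epsilon}$ with $C_\gamma=\sum_{i=2}^{6}C_i$. Since the estimate is uniform over $\ket{\psi}_1\in V$, it lifts to the operator-norm bound $\|\{\hat{A}_1,\hat{A}_5\}\|\leq 4C_\gamma\sqrt{\epsilon}$. To conclude the lemma, I apply each symmetry (\ref{symmetry1})--(\ref{symmetry4}), which is a relabelling of the observables (possibly with sign flips) leaving the inequality—and thus all of Eqs.~(\ref{srcbound1})--(\ref{srcbound5}) and the entries of Table \ref{tab:my_label}—invariant. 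The chain above is therefore transportable verbatim to $\{\hat{A}_2,\hat{A}_4\}$, $\{\hat{A}_3,\hat{A}_5\}$, $\{\hat{A}_1,\hat{A}_6\}$, $\{\hat{A}_2,\hat{A}_6\}$ and $\{\hat{A}_3,\hat{A}_4\}$, yielding the same bound in each case.

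The main obstacle is the bookkeeping of intermediate vectors: for each step in the chain I must identify a single $\ket{\psi}_k\in V$ that both sides approximate up to $O(\sqrt{\epsilon})$, rather than settling for two different nearby targets. This is precisely what Lemma \ref{apprxsubspace}, together with the design of Table \ref{tab:my_label} to reproduce the noise-free relations of Eqs.~(\ref{MaxQ_const1})--(\ref{MaxQ_const5}), guarantees. Once this matching is in place, the composition of five $2C_i\sqrt{\epsilon}$ contributions produces only an $O(\sqrt{\epsilon})$ total error with no higher-order blow-up, which is what the lemma requires.
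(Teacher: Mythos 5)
Your proposal is correct and follows essentially the same route as the paper: the paper likewise bounds one representative anti-commutator (it picks $\{A_3,A_4\}$ rather than your $\{A_1,A_5\}$) by replacing each equality in the noise-free chain with a triangle-inequality step, bounding each link via Lemma \ref{apprxsubspace} with matched intermediate vectors in $V$, and collecting the $2C_i\sqrt{\epsilon}$ contributions to obtain $4C_\gamma\sqrt{\epsilon}$, then handles the remaining cases by the same argument. Your explicit appeal to the symmetries for the other five cases is just a slightly more articulated version of the paper's ``in the same way'' remark.
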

\begin{proof}
We consider $\lVert\{\hat{A}_3,\hat{A}_4\}\rVert$, and prove the upper bound for it by showing that $\lVert\{A_3,A_4\}\rVert\leq 4C_{\gamma}\sqrt{\epsilon}$ for every element of $V$. 
Similar to the proof of Lemma \ref{apprxcommlemma}, consider any arbitrary $\ket{\psi}_1\in V$, then 
\begin{align*}
    \lVert\{A_3,A_4\}\ket{\psi}_1\rVert \leq \lVert
    (A_3A_4&-A_3A_1)\ket{\psi}_1\rVert+\lVert (A_3A_1-A_2)\ket{\psi}_1\rVert+\lVert (A_2-A_5)\ket{\psi}_1\rVert\\&+\lVert (A_5-A_4A_6)\ket{\psi}_1\rVert+\lVert(A_4A_6+A_4A_3)\ket{\psi}_1\rVert,
\end{align*}
where we use the triangle inequality. Then the Lemma \ref{apprxsubspace}, the first three terms in RHS are easy to compute, and, directly, $\lVert(A_3A_4-A_3A_1)\ket{\psi}_1\rVert\leq 2\sqrt{\epsilon}C_4$, $\lVert (A_3A_1-A_2)\ket{\psi}_1\rVert\leq 2\sqrt{\epsilon}(C_2+C_3)$ and $\lVert (A_2-A_5)\ket{\psi}_1\rVert\leq 2\sqrt{\epsilon}(C_2+C_5)$. The fourth one, can be estimated as,
$\lVert (A_5-A_4A_6)\ket{\psi}_1\rVert\leq 2\sqrt{\epsilon}(C_4+C_5+C_6)$ and the last one, can be expressed as,
$\lVert(A_4A_6+A_4A_3)\ket{\psi}_1\rVert\leq 2\sqrt{\epsilon}(C_6+C_3)$. Collecting all these contributions, we find that 
\begin{align*}
    \lVert\{A_3,A_4\}\ket{\psi}_1\rVert\leq 4C_{\gamma}\sqrt{\epsilon},
\end{align*}
where $C_{\gamma}=\sum_{i=2}^6C_i$.
%\begin{itemize}
 %   \item $\lVert[A_1,A_5\}\ket{\psi}\rVert \leq \lVert
  %  (A_1A_5-A_1A_2){\psi}\rVert+\lVert (A_1A_2-A_3)\ket{\psi}\rVert+\lVert (A_3+A_6)\ket{\psi}\rVert+\lVert (-A_6+A_5A_4)\ket{\psi}\rVert+\lVert(-A_5A_4+A_5A_1)\ket{\psi}\rVert\leq 14C_{\gamma}\sqrt{\epsilon}$, where we use the triangle inequality and then the relations in \eqref{srcbound1}-\eqref{srcbound5}.
 %   \item $\lVert[A_1,A_5\}A_1\ket{\psi}\rVert = \lVert A_1A_5A_1-A_1A_1A_5\rVert\leq14C_{\gamma}\sqrt{\epsilon}$, where we simply use the bound on $\lVert\{A_1,A_5\}\ket{\psi}\rVert$.
%    \item $\lVert[A_1,A_5\}A_5\ket{\psi}\rVert= \lVert A_1A_5A_5-A_5A_1A_5\rVert\leq14C_{\gamma}\sqrt{\epsilon}$, where again we simply use the bound on $\lVert[A_1,A_5\}\ket{\psi}\rVert$.
 %   \item $\lVert[A_1, A_5\}A_1A_5\ket{\psi}\rVert = \lVert A_1A_5A_1A_5-\mathbb{1}\rVert =\lVert A_1A_5-A_5A_1\rVert\leq14C_{\gamma}\sqrt{\epsilon}$, where to get the second equality we multiply with $A_5A_1$ as the multiplication with unitary matrices does not change the norm.
%\end{itemize}
%Thus the inequality $\lVert[A_1,A_3\}\rVert\leq 4C_{\gamma}\sqrt{\epsilon}$ is true for arbitrary states from $V$. 
In the same way, we can show the other anti-commutator bounds by the Lemma \ref{apprxsubspace}.
%following the proofs of Lemma \ref{lemmaanticomm} and replacing each equality relation with the relevant triangle inequality and then using the upper bounds in \eqref{srcbound1}-\eqref{srcbound5}.
\end{proof}
Having proved Lemma \ref{apprxanticommlemma}, we can show that for a basis $\{\ket{\phi_i}\}$, we have $\Tr(\hat{A}_5\hat{A}_1\hat{A}_5+\hat{A}_1)=\sum_{i}\bra{\phi_i}\hat{A}_5\hat{A}_1\hat{A}_5+\hat{A}_1\ket{\phi_i}\leq \sum_{i}\lVert\bra{\phi_i}\{\hat{A}_1,\hat{A}_5\}\ket{\phi_i}\rVert\leq 4 C_{\gamma}\sqrt{\epsilon}$, where the first inequality is obtained using the triangle inequality and the next one by using the bound on $\lVert \{\hat{A}_1,\hat{A}_5\}\ket{\psi}_1\rVert$. Therefore, we can deduce that $\Tr(\hat{A}_1)\leq 4C_{\gamma}\sqrt{\epsilon}$ which is possible only when $\Tr(\hat{A}_1)=0$. Similarly for all other operators $\hat{A}_i$. As in the noise-free case, we can write the dimension $d=2k$ for $k\in \mathbb{N}$, and thus $V=\mathbb{C}^2\otimes \mathbb{C}^k$. Also since, dim$V \leq 4$, we can only have $k=1,2$. Now we are ready to certify the measurements in the following lemma
\begin{lemma} \label{robustoperatorslemma}
Suppose a non-maximal quantum violation $5-\epsilon$ of the inequality \eqref{TncIneq} is observed. Then, there exists a basis in $V$ such that
\begin{align}  \label{apprxObs2Ctilde}
&\hat{A}_1=X\otimes \mathbbm{1}, \quad\quad\quad\quad\quad\quad\quad\quad\quad\quad\quad\quad\:\:
\lVert \hat{A}_2-\mathbb{1}\otimes Z\rVert \leq 2c_2\sqrt{\epsilon}+8C\epsilon^{1/4},\nonumber\\
&\lVert \hat{A}_3-X\otimes Z\rVert \leq \nonumber 2c_3\sqrt{\epsilon}+16C\epsilon^{1/4}, \:\:\:\quad\quad \lVert \hat{A}_4-\mathbb{1}\otimes X\rVert \leq  2c_4\sqrt{\epsilon}+16C\epsilon^{1/4}, \nonumber\\ 
&\lVert A_5-Z\otimes \mathbb{1}\rVert\leq  6C_{\gamma}\sqrt{\epsilon}+2C\epsilon^{\frac{1}{4}}, \quad {\rm and} \quad \lVert\hat{A}_6-Z\otimes X\rVert \leq    2c_6\sqrt{\epsilon}+16C\epsilon^{1/4},
\end{align}
where $c_2=C_{2}+2\bar{C}_2$, $c_3=C_3+2\bar{C}_3+2\bar{C}_4+2C_{\gamma}-2C_2$, $c_4=C_2+2\bar{C}_2+2\bar{C}_4+C_{\gamma}$ and $c_6=C_{\gamma}+C_2+C_3+2\bar{C}_2+2\bar{C}_6$.
\end{lemma}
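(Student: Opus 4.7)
The plan is to mimic the structure of the noise-free proof of Lemma \ref{lemma3}, replacing every algebraic identity by a perturbative bound obtained through a chain of triangle inequalities together with a standard matrix-perturbation argument. The raw material is: the approximate commutation bounds of Lemma \ref{apprxcommlemma}, the approximate anti-commutation bounds of Lemma \ref{apprxanticommlemma}, the vanishing-trace estimates $|\Tr(\hat A_i)|=O(\sqrt\epsilon)$ established just after Lemma \ref{apprxanticommlemma}, and the eigenvalue bound \eqref{robusteigen} which says that each $\hat A_i$ for $i\ge 2$ is a near-involution in the sense $\|\hat A_i^2-\mathbbm{1}\|=O(\sqrt\epsilon)$.

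First I would fix the basis by using that $\hat A_1$ is an \emph{exact} involution on $V$, since $A_1 V=V$ holds precisely. Diagonalizing $\hat A_1=X\otimes\mathbbm{1}_k$ is therefore noise-free, with $k\in\{1,2\}$ as in the ideal case. I would then analyze $\hat A_5$ in this basis: the approximate anti-commutation $\|\{\hat A_1,\hat A_5\}\|\le 4C_\gamma\sqrt\epsilon$ forces the block-diagonal part of $\hat A_5$ (with respect to the $\pm 1$ eigenspaces of $\hat A_1$) to have norm $O(\sqrt\epsilon)$, while the vanishing-trace condition controls the $\mathbbm{1}_2$ piece of the remaining block. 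Combining this with $\|\hat A_5^2-\mathbbm{1}\|=O(\sqrt\epsilon)$ and a direct $\sin\Theta$-type argument should yield $\|\hat A_5-Z\otimes\mathbbm{1}_k\|\le 6C_\gamma\sqrt\epsilon+2C\epsilon^{1/4}$. This step is also the \emph{origin} of the $\epsilon^{1/4}$ terms: converting an $O(\sqrt\epsilon)$ anti-commutator bound into a bound on the deviation of $\hat A_5$ from $Z\otimes\mathbbm{1}_k$ costs a factor of the square root of the spectral gap, which itself is only $O(\sqrt\epsilon)$ away from being exactly $\pm 1$.

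Next I would expand each of $\hat A_2,\hat A_3,\hat A_4,\hat A_6$ in the Pauli basis of the first tensor factor, e.g.\ $\hat A_2=\mathbbm{1}_2\otimes M_1+X\otimes M_2+Y\otimes M_3+Z\otimes M_4$. The bound $\|[\hat A_2,\hat A_1]\|\le 4(C_2+C_3)\sqrt\epsilon$ immediately yields $\|M_3\|,\|M_4\|=O(\sqrt\epsilon)$; and $\|[\hat A_2,\hat A_5]\|\le 4(C_2+C_5)\sqrt\epsilon$, combined with the previously established approximation $\hat A_5\approx Z\otimes\mathbbm{1}_k$, yields $\|M_2\|=O(\sqrt\epsilon)+O(\epsilon^{1/4})$ where the extra $\epsilon^{1/4}$ inherits from the error already carried by $\hat A_5$. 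Setting $M:=M_1$ gives $\|\hat A_2-\mathbbm{1}_2\otimes M\|\le 2c_2\sqrt\epsilon+8C\epsilon^{1/4}$. Repeating this decomposition for the remaining operators produces the analogous forms $\hat A_4\approx\mathbbm{1}_2\otimes O$, $\hat A_3\approx X\otimes N$, $\hat A_6\approx Z\otimes P$, with the auxiliary constants $\bar C_i$ and $C$ appearing in the statement collecting the triangle-inequality constants accumulated through these substitutions.

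The main obstacle will be the second tensor factor. Transporting the anti-commutation bound $\|\{\hat A_2,\hat A_4\}\|\le 4C_\gamma\sqrt\epsilon$ through the first-factor decompositions just obtained gives $\|\{M,O\}\|=O(\sqrt\epsilon)+O(\epsilon^{1/4})$, with $M$ and $O$ only approximate involutions. I would first use the eigenvalue gap of size $1-O(\sqrt\epsilon)$ to rule out $k=1$, thereby forcing $\mathbb{C}^k=\mathbb{C}^2$, and then run the ideal $\{M,O\}=0$ argument perturbatively: choose $U_2$ that simultaneously brings $M$ within $O(\epsilon^{1/4})$ of $Z$ and $O$ within $O(\epsilon^{1/4})$ of $X$, again via a Davis--Kahan-type step that turns approximate anti-commutation into alignment of eigenbases. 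The remaining operators $N$ and $P$ are then pinned to $\pm Z$ and $\pm X$ by the approximate commutation with $M$ (resp.\ $O$) and anti-commutation with $O$ (resp.\ $M$) drawn from Lemmas \ref{apprxcommlemma} and \ref{apprxanticommlemma}, and the overall signs are fixed to $+1$ by the approximate stabilizer relations implicit in \eqref{expfid1}--\eqref{expfid2}, exactly as at the end of the proof of Theorem \ref{Theo2qubit}. Collecting all accumulated error contributions and identifying the resulting prefactors with the constants $c_i$, $\bar C_i$, $C_\gamma$, and $C$ then produces precisely the bounds stated in Eq. \eqref{apprxObs2Ctilde}.
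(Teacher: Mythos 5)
Your proposal follows the same architecture as the paper's proof: fix $\hat A_1=X\otimes\mathbbm{1}_k$ exactly (since $A_1V=V$ holds without error), bring $\hat A_5$ close to $Z\otimes\mathbbm{1}_k$ using the approximate anti-commutator, Pauli-decompose $\hat A_2,\hat A_3,\hat A_4,\hat A_6$ and kill the unwanted components with the bounds of Lemmas \ref{apprxcommlemma} and \ref{apprxanticommlemma}, and finally align the second tensor factor with further unitaries while the accumulated triangle-inequality constants become the $c_i$ and $\bar C_i$.

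The one concrete gap is your justification of the $\epsilon^{1/4}$ terms. You attribute them to a Davis--Kahan/$\sin\Theta$ step whose cost is ``the square root of the spectral gap''; but a $\sin\Theta$ bound scales as (perturbation)/(gap) and does not by itself produce a fractional power of the input error. In the paper the $\epsilon^{1/4}$ enters at a specific point: after reducing $\hat A_5^{(1)}$ to $Y\otimes M_y+Z\otimes M_z$ up to $O(\sqrt{\epsilon})$, the near-unitarity of $\hat A_5$ forces $\lVert[M_y,M_z]\rVert\leq 4C_\gamma\sqrt{\epsilon}$ as in Eq.~\eqref{apprxMyMz}, and one then invokes the quantitative theorem that two almost-commuting Hermitian matrices lie within $C\lVert[\cdot,\cdot]\rVert^{1/2}$ of an exactly commuting pair, Eq.~\eqref{apprxcommtheorem}; the half power of $\sqrt{\epsilon}$ is precisely the $\epsilon^{1/4}$. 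Without this ingredient (or an explicit low-dimensional substitute exploiting $k\leq 2$) you cannot simultaneously straighten $M_y$ and $M_z$ into multiples of the identity, and the subsequent alignment of $N,O,P,Q$ in the second factor, which reuses the same mechanism, is likewise unjustified. Your other deviations are harmless: you use the trace estimates to control the $\mathbbm{1}_2$ component of $\hat A_5$ where the paper gets it directly from the anti-commutator with $X\otimes\mathbbm{1}$, and you fix the residual signs via the approximate stabilizer relations, which the paper leaves largely implicit.
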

\begin{proof}
From Lemma \ref{apprxanticommlemma}, we have $\lVert \{A_1,A_5\}\rVert\leq 4C_{\gamma}\sqrt{\epsilon}$, which implies that there exists a unitary $\mathcal{U}_1$ acting on $V$ such that 
\begin{align}\label{a1a5}        
\hat{A}_1^{(1)}=X\otimes \mathbb{1}, \quad {\rm and} \quad
\hat{A}_5^{(1)}=\mathbb{1}\otimes M_1+X\otimes M_x+Y\otimes M_y+Z\otimes M_z,
\end{align}
where action of $\mathcal{U}_k$ on the operator $\mathcal{O}$ is denoted as $\mathcal{U}_k^{\dagger}\mathcal{O}\mathcal{U}_k=\mathcal{O}^{(k)}$ for our convenience. 
Substituting the above form in the bound $\lVert\{\hat{A_1},\hat{A_5}\}\rVert\leq 4C_{\gamma}\sqrt{\epsilon}$, we get that $\lVert \{\hat{A}_1^{(1)},\hat{A}_5^{(1)}\}\rVert = \lVert \{X\otimes \mathbb{1},\hat{A}_5^{(1)}\}\rVert \leq 4C_{\gamma}\sqrt{\epsilon}$, which further implies that
\begin{align}\label{A5bound}
 \lVert\mathbb{1}\otimes M_1+X\otimes M_x\rVert= \lVert \hat{A}_5^{(1)}-Y\otimes M_y-Z\otimes M_z\rVert \leq 2C_{\gamma}\sqrt{\epsilon}.
\end{align}
 Therefore, it holds that $\lVert (Y\otimes M_y+Z\otimes M_z)^2\rVert\geq (1-2C_{\gamma}\sqrt{\epsilon})^2$. From this approximate unitary, we can deduce the following 
\begin{align}\label{apprxMyMz}
&\lVert\mathbb{1}\otimes(M_y^2+M_z^2)+YZ\otimes[M_y,M_z]\rVert\geq (1-2C_{\gamma}\sqrt{\epsilon})^2,\nonumber\\
&\implies
   (1-2C_{\gamma}\sqrt{\epsilon})^2 \leq \lVert M_y^2+M_z^2\rVert\leq 1, \quad\text{and}\quad \lVert[M_y,M_z]\rVert\leq 4C_{\gamma}\sqrt{\epsilon}-16C_{\gamma}\epsilon\leq 4C_{\gamma}\sqrt{\epsilon}
\end{align}
It is known that for such almost commuting Hermitian matrices we can find another pair of commuting Hermitian matrices $M_y',M_z'$ arbitrarily close to $M_y,M_z$ respectively \cite{kachkovskiy2016distance,2009CMaPh.291..321H}, which in our case
%\footnote{Original relation was derived for operator norm. By noticing that operator norm is greater or equal to vector norm, we can use the result.} 
can be stated as
 \begin{align}\label{apprxcommtheorem}
 [M_y',M_z']=0, \quad \text{s.t.,} \quad\{\lVert M_y-M_y'\rVert,\:\:\lVert M_z-M_z'\rVert\}\leq C\epsilon^{1/4},
 \end{align}
where $C$ is a constant independent of $M_y,M_z$, however, it is function of $\epsilon$ and grows slower than any power of $\epsilon$. From \eqref{apprxMyMz} it can also be seen that $\lVert M_y-\cos\alpha\mathbb{1}\rVert\leq 2C_{\gamma}\sqrt{\epsilon}$ and $\lVert M_z-\sin\alpha\mathbb{1}\rVert\leq 2C_{\gamma}\sqrt{\epsilon}$, $\forall \alpha\in[0,2\pi]$, which further implies the following 
\begin{align*}
\lVert M_y'-\cos\alpha\mathbbm{1}\rVert \leq  \lVert M_y'-M_y\rVert+\lVert M_y-\cos\alpha\mathbbm{1}\rVert\leq 2C_{\gamma}\sqrt{\epsilon}+C\epsilon^{1/4}.    
\end{align*} 
And similarly, we have $\lVert M_z'-\sin\alpha\mathbbm{1}\rVert \leq 2C_{\gamma}\sqrt{\epsilon}+C\epsilon^{1/4}$.
Now by substituting these bounds in \eqref{A5bound} we get the following by using triangle inequalities
\begin{align}
    \lVert \hat{A}_5^{(1)}-\cos\alpha & Y\otimes \mathbb{1}-\sin\alpha Z\otimes \mathbb{1}\rVert - \lVert Y\otimes(M_y'-\cos\alpha\mathbb{1})\rVert-\lVert Z\otimes(M_z'-\sin\alpha\mathbb{1})\rVert \leq  2C_{\gamma}\sqrt{\epsilon},\nonumber\\
   & \implies
    \lVert \hat{A}_5^{(1)}-\cos\alpha Y\otimes \mathbb{1}-\sin\alpha Z\otimes \mathbb{1}\rVert\leq 6C_{\gamma}\sqrt{\epsilon}+2C\epsilon^{1/4}.
\end{align}
Therefore, we can proceed further by applying another unitary $\mathcal{U}_2$ such that $\mathcal{U}_2^{\dagger}( \cos\alpha Y\otimes \mathbb{1}+\sin\alpha Z\otimes \mathbb{1})\mathcal{U}_2\to Z\otimes\mathbb{1}$ 
%\begin{align}
     %\lVert \mathcal{U}_2^{\dagger}\mathcal{U}_1^{\dagger}\hat{A}_5\mathcal{U}_1\mathcal{U}_2-\mathcal{U}_2^{\dagger}(\cos\alpha Y\otimes \mathbb{1}-\sin\alpha Z\otimes\mathbb{1})\mathcal{U}_2\rVert=
     %\lVert \hat{A}_5^{(1)}-Z\otimes\mathbb{1}\rVert \nonumber \leq 6C_{\gamma}\sqrt{\epsilon}+2C\epsilon^{1/4}.
%\end{align}
and keeps $\hat{A}_1^{(1)}=X\otimes \mathbb{1}$ unchanged as the rotation is in the $y-z$ plane. Therefore we have a unitary $U_1=\mathcal{U}_1\mathcal{U}_2$ such that
\begin{align}\label{A1A5U1}
    U_1^{\dagger}\hat{A}_1U_1=X\otimes \mathbb{1},\quad
    \lVert U_1^{\dagger}\hat{A}_5U_1-Z\otimes \mathbb{1}\rVert\leq 6C_{\gamma}\sqrt{\epsilon}+2C\epsilon^{1/4}.
\end{align}
Now that we have the above form of $\hat{A}_1$ and $\hat{A}_5$, we can write the remaining operators in the same basis as following by using the Lemmas \ref{apprxcommlemma} and \ref{apprxanticommlemma}
\begin{align}\label{afterfirstunitary}
    &\lVert U_1^{\dagger}\hat{A}_2U_1-\mathbb{1}\otimes N\rVert \leq 2\bar{C}_2\sqrt{\epsilon}+4C\epsilon^{1/4}, \quad \quad \lVert U_1^{\dagger}\hat{A}_3U_1-X\otimes O\rVert \leq 2\bar{C}_{3}\sqrt{\epsilon}+4C\epsilon^{1/4},\nonumber  \\
    &\lVert U_1^{\dagger}\hat{A}_4U_1-\mathbb{1}\otimes P\rVert \leq 2\bar{C}_{4}\sqrt{\epsilon}+4C\epsilon^{1/4},\quad\quad\lVert U_1^{\dagger}\hat{A}_6U_1-Z\otimes Q\rVert \leq 2\bar{C}_{6}\sqrt{\epsilon}+4C\epsilon^{1/4}, 
\end{align}
where $\bar{C}_2=(4+\sqrt{2})C_{2}+\sqrt{2}C_3+4C_5$, $\bar{C}_{3}=4C_\gamma+\sqrt{2}(C_2+C_3)$, $\bar{C}_{4}=(4+\sqrt{2})C_4+4(C_5+C_6)$, $\bar{C}_{6}=\sqrt{C_{\gamma}^2+C_6^2}+4(C_4+C_5+C_6)$, and $N,O,P,Q$ are hermitian operators acting on the subspace of dimension $k\leq 2$. We will show explicitly how to obtain the relation for $\hat{A}_2$ and the remaining operators can be obtained similarly. First, by expanding $\hat{A}_2$ in the Pauli basis as following
\begin{align}\label{A2paulibasis}
    \hat{A}_2=\mathbb{1}\otimes N+X\otimes N_x+Y\otimes N_y+Z\otimes N_z.
\end{align}
Substituting this form in the bounds on commutation relations of $\hat{A}_2$ with $\hat{A}_1$ and $\hat{A}_5$ from Lemma \ref{apprxcommlemma}, we will get the following relations respectively
\begin{align}\label{A2bound}
    &\lVert Y\otimes N_y+Z\otimes N_z \rVert=\lVert\hat{A}_2 -\mathbb{1}\otimes N-X\otimes N_x\rVert\leq 2(C_{2}+C_3)\sqrt{\epsilon},\nonumber \\
       & \lVert Y\otimes N_y+X\otimes N_x \rVert=\lVert\hat{A}_2 -\mathbb{1}\otimes N-Z\otimes N_z\rVert\leq 8(C_{2}+C_5)\sqrt{\epsilon}+4C\epsilon^{1/4}.
\end{align}
Now notice that since $\hat{A}_2$ is almost a unitary matrix, the operators $N_x,N_y,N_z$ approximately commute with each other. We can further show by manipulating the above expression that
\begin{align*}
    \lVert (Y\otimes N_y+&Z\otimes N_z)^2 \rVert=\lVert \mathbb{1}\otimes N_y^2+\mathbb{1}\otimes N_z^2+YZ[\mathbbm{1}\otimes N_y,\mathbbm{1}\otimes N_z] \rVert\leq 4(C_{2}+C_3)^2\epsilon \\
   &\implies \lVert \mathbb{1}\otimes N_y^2+\mathbb{1}\otimes N_z^2\rVert\leq 4(C_{2}+C_3)^2\epsilon+\lVert YZ[\mathbbm{1}\otimes N_y,\mathbbm{1}\otimes N_z] \rVert\leq 8(C_{2}+C_3)^2\epsilon\\ 
    &\implies\lVert N_y \rVert,\lVert N_z \rVert \leq (C_{2}+C_3) \sqrt{8\epsilon},
\end{align*}
where we have expanded using the triangle inequality and then upper bounded $\lVert YZ[\mathbbm{1}\otimes N_y,\mathbbm{1}\otimes N_z]\rVert$ by $(2(C_{2}+C_3)\sqrt{\epsilon})^2$, using the fact that  $1-2C_{2}\sqrt{\epsilon}\leq |a_2|\leq 1$ from $\eqref{robusteigen}$. 
Substituting these in the second relation of \eqref{A2bound}, we will get the bound $\lVert \hat{A}_2-\mathbb{1}\otimes N\rVert\leq 2\bar{C}_2\sqrt{\epsilon}+4C\epsilon^{1/4}$, with $\bar{C}_2=(4+\sqrt{2})C_{2}+\sqrt{2}C_3+4C_5$. In a similar way we can get the relations for the remaining operators in $\hat{A}_3,\hat{A_4},\hat{A}_6$ in \eqref{afterfirstunitary}.

Now, using the bounds on $\hat{A}_2$ and $\hat{A}_4$ \eqref{afterfirstunitary} and the bound on $\{\hat{A_2},\hat{A}_4\}$ from Lemma \ref{apprxanticommlemma} we can show that
\begin{align}\label{NPanticomm}
    \lVert\{\mathbb{1}\otimes N,\mathbb{1}\otimes P\}\rVert\leq \lVert (\mathbbm{1}\otimes N-&A_2)\mathbbm{1}\otimes P\rVert+\lVert A_2(\mathbb{1}\otimes P-A_4)\rVert+\lVert \mathbbm{1}\otimes P(\mathbbm{1}\otimes N-A_2)\rVert+\lVert (\mathbb{1}\otimes P-A_4)A_2\rVert\nonumber \\ & +\lVert \{A_2,A_4\}\rVert \leq 4(C_{\gamma}+\bar{C}_2+\bar{C}_4)\sqrt{\epsilon}+16C\epsilon^{1/4}.
\end{align}
We can now find a unitary operator $\mathcal{U}_3$ such that 
\begin{align}\label{NU3}
    \lVert \mathbb{1}\otimes N^{(3)}-\mathbb{1}\otimes Z \rVert \leq \lVert \mathbb{1}\otimes N -\hat{A}_2 \rVert + \lVert \hat{A}_2^{(3)} -\mathbb{1}\otimes Z \rVert\leq 2(C_{2}+\bar{C}_2)\sqrt{\epsilon}+4C\epsilon^{1/4}, 
\end{align}
where we have first used the triangle inequality and then the bounds on $\lVert\mathbbm{1}\otimes N-\hat{A}_2\rVert$ from \eqref{afterfirstunitary}, $\lVert \hat{A}_2-\mathbbm{1}\otimes Z\rVert$ from the fact that the minimum eigenvalue of $\hat{A}_2$ is $1-2C_{2}\sqrt{\epsilon}$. Now using the anti-commutation bound in \eqref{NPanticomm} and  the inequality in \eqref{NU3}, we can get the following relation
\begin{align*}
    \lVert\{\mathbbm{1}\otimes P^{(3)},\mathbbm{1}\otimes Z\} \rVert&\leq\lVert \{\mathbbm{1}\otimes P,\mathbbm{1}\otimes N\}\rVert+\lVert \{\mathbbm{1}\otimes P^{(3)},(\mathbbm{1}\otimes Z-\mathbbm{1}\otimes N^{(3)})\}\rVert \nonumber \\ &\leq 4(C_{\gamma}+C_2+2\bar{C}_2+\bar{C}_4)\sqrt{\epsilon}+24C\epsilon^{1/4},
\end{align*}
where we first use the triangle inequality and then apply the upper bounds. From the above relation, the form of $\mathbbm{1}\otimes P^{(3)}=p_1\mathbbm{1}\otimes \mathbbm{1}+p_x\mathbbm{1}\otimes X+p_y\mathbbm{1}\otimes Y+p_z\mathbbm{1}\otimes Z$ can be written as 
\begin{align*}
    \lVert \mathbbm{1}\otimes P^{(3)}-p_x\mathbbm{1}\otimes X-p_y\mathbbm{1}\otimes Y\rVert \leq 2(C_{\gamma}+C_2+2\bar{C}_2+\bar{C}_4)\sqrt{\epsilon}+12C\epsilon^{1/4}.
\end{align*}
By finding another unitary $\mathcal{U}_4$ which rotates the operator $p_xX+p_yY$ to $X$ and leaves $\mathbbm{1}\otimes Z$ invariant as it rotates the operators only in the $x-y$ plane. Thus, we can get the following by defining $U_2=\mathcal{U}_4\mathcal{U}_3$, 
\begin{align}\label{NPU4}     
&\lVert U_2^{\dagger}\mathbbm{1}\otimes N U_2-\mathbbm{1}\otimes Z\rVert\leq  2(C_{2}+\bar{C}_2)\sqrt{\epsilon}+4C\epsilon^{1/4}, \nonumber \\
&\lVert U_2^{\dagger}\mathbbm{1}\otimes P U_2-\mathbbm{1}\otimes X\rVert\leq 2(C_{\gamma}+C_2+2\bar{C}_2+\bar{C}_4)\sqrt{\epsilon}+12C\epsilon^{1/4}.
\end{align}
%The unitary operator $\mathcal{U}_4$ leaves $\mathbbm{1}\otimes Z$ invariant as it rotates the operators only in the $x-y$ plane. 
Thus by defining the effect of these unitaries as $U=U_2U_1$, and then substituting the relations of \eqref{NPU4} further in \eqref{afterfirstunitary} for $\hat{A}_2$ and $\hat{A}_4$, we will get  
\begin{align}\label{A2A4U2}
   % \nonumber\\&~~~~~~~~~~~~~~~~~~~~~~~~~~~~~~~~~~~~~~~~~~~~~~~~~~
     &\lVert U^{\dagger}\hat{A}_4U-\mathbb{1}\otimes X\rVert \leq \lVert U^{\dagger}\hat{A}_4U-U_2^{\dagger}\mathbb{1}\otimes PU_2\rVert +\lVert U_2^{\dagger}\mathbb{1}\otimes PU_2-\mathbbm{1}\otimes X\rVert
     %\nonumber\\&~~~~~~~~~~~~~~~~~~~~~~~~~~~~~~~~~~~~~~~~~~~~~~~~~~
     \leq   2(C_2+2\bar{C}_2+2\bar{C}_4+C_{\gamma})\sqrt{\epsilon}+16C\epsilon^{1/4}, \nonumber \\
      &\lVert U^{\dagger}\hat{A}_2U-\mathbb{1}\otimes Z\rVert \leq \lVert U^{\dagger}\hat{A}_2U-U_2^{\dagger}\mathbb{1}\otimes NU_2\rVert +\lVert U_2^{\dagger}\mathbb{1}\otimes NU_2-\mathbbm{1}\otimes Z\rVert
      \leq 2(C_{2}+2\bar{C}_2)\sqrt{\epsilon}+8C\epsilon^{1/4}.
\end{align}
Now to obtain the operators $\hat{A}_3,\hat{A}_6$ we need to determine $O$ and $Q$. To do this, we use the bounds on the commutator and anti-commutator of $\hat{A}_3(\hat{A}_6)$ with $\hat{A}_2$ and $\hat{A}_4$ from Lemmas \ref{apprxcommlemma} and \ref{apprxanticommlemma}. Let us note the following relations for $O$, 
\begin{align*}
    \lVert [X\otimes O,\mathbb{1}\otimes Z]\rVert \leq \lVert (X\otimes O-&\hat{A}_3)\mathbb \otimes Z\rVert+ \lVert \hat{A}_3(\mathbb{1}\otimes Z-\hat{A}_2)\rVert +\lVert (\hat{A}_2-\mathbb{1}\otimes Z)\hat{A}_3\rVert +\lVert \mathbb{1} \otimes Z(\hat{A}_3-X\otimes O)\rVert \\&+\lVert[\hat{A}_2,\hat{A}_3]\rVert  \leq 4(C_2+\bar{C}_2+C_3+\bar{C}_3)\sqrt{\epsilon}+24C\epsilon^{1/4}.\\
     \lVert \{X\otimes O,\mathbb{1}\otimes X\}\rVert\leq \lVert (X\otimes O-&\hat{A}_3)\mathbb{1}\otimes X\rVert+\lVert A_3(\mathbb{1}\otimes X-A_4)\rVert+\lVert \mathbb{1}\otimes X(X\otimes O-\hat{A}_3)\rVert+\lVert (\mathbb{1}\otimes X-A_4)A_3\rVert\\&+ \lVert\{A_3,A_4\}\rVert  \leq  4(C_{\gamma}+\bar{C}_3+\bar{C}_4)\sqrt{\epsilon}+24C\epsilon^{1/4}.
\end{align*}
Where we have used the triangle inequalities and then the bounds from \eqref{afterfirstunitary} and \eqref{A2A4U2} and then the bounds on $\lVert[A_2,A_3]\rVert$ and $\lVert\{A_3,A_4\}\rVert$. From the above two relations, we can deduce that $\lVert[O,Z]\rVert $ and $\lVert\{O,X\}\rVert$ also satisfy the same bound. Expanding the operator $O$ as $o_1\mathbb{1}+o_x X+o_y Y+o_zZ$, we can conclude the following
\begin{align*}
    &\lVert o_xX+o_yY\rVert \leq 2(C_2+\bar{C}_2+C_3+\bar{C}_3)\sqrt{\epsilon}+12C\epsilon^{1/4}, \\
    &\lVert o_1\mathbb{1}+o_xX\rVert\leq 2(C_{\gamma}+\bar{C}_3+\bar{C}_4)\sqrt{\epsilon}+12C\epsilon^{1/4}, 
\end{align*}
where from the 2nd relation we can write $\lvert o_1\pm o_x\rvert\leq 2(C_{\gamma}+\bar{C}_3+\bar{C}_4)\sqrt{\epsilon}+12C\epsilon^{1/4}$, which implies that $|o_1|,|o_x|$ are also bounded by the same number. Substituting this in the first inequality we can bound $\lVert o_y\rVert$ and then from there we can show that
\begin{align*}
    \lVert O-Z\rVert&\leq \lVert O-o_zZ\rVert+\lVert o_zZ-Z\rVert=\lVert o_1\mathbb{1}+o_xX+o_yY\rVert+\lVert o_zZ-Z\rVert\leq  \lVert o_1\mathbb{1}+ o_xX\rVert+\lVert o_yY\rVert +\lVert o_zZ-Z\rVert \\ &\leq  2(C_3+\bar{C}_3+2\bar{C}_4+2C_{\gamma}-2C_2)\sqrt{\epsilon}+12C\epsilon^{1/4},
\end{align*}
where we have used the triangle inequalities and then the bounds on respective operators proved just above. Also, $|o_z\pm1|\leq 2C_{3}\sqrt{\epsilon}$ because the minimum magnitude of $\hat{A}_3$ eigenvalues is $1-2C_{3}\sqrt{\epsilon}$. In the same way, we can get the following bound on  $\lVert Q-X\rVert$ 
\begin{align*}
& \lVert Q-X\rVert \leq 2(C_{\gamma}+C_2+C_3+2\bar{C}_2+\bar{C}_6)\sqrt{\epsilon}+12C\epsilon^{1/4}
\end{align*}
Substituting these relations in \eqref{afterfirstunitary}, and by using the  triangle inequalities, we get that 
\begin{align*}
%\label{AllafterU2}
    &\lVert U^{\dagger}\hat{A}_3U-X\otimes Z\rVert \leq \lVert U^{\dagger}\hat{A}_3U-X\otimes O\rVert+\lVert X\otimes O-X\otimes Z\rVert\leq 2(C_3+2\bar{C}_3+2\bar{C}_4+2C_{\gamma}-2C_2)\sqrt{\epsilon}+16C\epsilon^{1/4},\nonumber \\
    &\lVert U^{\dagger}\hat{A}_6U-Z\otimes X\rVert \leq \lVert U^{\dagger}\hat{A}_6U-Z\otimes Q\rVert+\lVert Z\otimes Q-Z\otimes X\rVert\leq 2(C_{\gamma}+C_2+C_3+2\bar{C}_2+2\bar{C}_6)\sqrt{\epsilon}+16C\epsilon^{1/4}.
\end{align*}
This completes the proof.
\end{proof}

\begin{lemma}\label{robuststatelemma}
    Suppose a non-maximal quantum violation $5-\epsilon$ of the inequality \eqref{TncIneq} is observed. Then, there exists a basis in the subspace $V$ such that 
    \begin{align}
        |\braket{\hat{\psi}|\phi^+}|^2\geq 1-\left(s_1\epsilon+s_2\epsilon^{3/4}+s_3\sqrt{\epsilon}\right),
    \end{align}
    where $s_1,s_2$ are positive constants, and $\ket{\phi^+}$ is the two qubit maximally entangled state and $U(P|\ket{\psi})=\ket{\hat\psi}$.
\end{lemma}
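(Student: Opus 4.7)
To prove Lemma \ref{robuststatelemma}, the plan is to show that $\ket{\hat\psi} = U P \ket{\psi}$ is approximately stabilized by $X \otimes X$ and $Z \otimes Z$, which uniquely identify $\ket{\phi^+}$ as their common $+1$ eigenvector in $\mathbbm{C}^2 \otimes \mathbbm{C}^2$, and then to extract a fidelity lower bound via the standard projector identity for $\ket{\phi^+}$.

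First, I would derive approximate stabilizer conditions on $\ket{\psi}$. Left-multiplying Eq. \eqref{srcbound3} by the unitary $A_1$ and using $A_1^2 = \mathbbm{1}$ yields $\|(\mathbbm{1} - A_1 A_4)\ket{\psi}\| \le 2\sqrt{\epsilon}$, and the same manipulation on Eq. \eqref{srcbound4} gives $\|(\mathbbm{1} - A_2 A_5)\ket{\psi}\| \le 2\sqrt{\epsilon}$. Next I would push these relations onto $V$ and through $U$: since $\ket{\psi} \in V$ by construction, $P\ket{\psi} = \ket{\psi}$; by Lemma \ref{apprxsubspace} replacing $A_4$ by $\hat A_4 = PA_4P^\dagger$ inside $A_1 A_4 \ket{\psi}$ costs at most $O(\sqrt{\epsilon})$ in norm, and similarly for $A_2 A_5$; finally Lemma \ref{robustoperatorslemma} lets me replace $U\hat A_1\hat A_4 U^\dagger$ by $X\otimes X$ and $U\hat A_2\hat A_5 U^\dagger$ by $Z\otimes Z$, each at a cost of $O(\sqrt{\epsilon}+\epsilon^{1/4})$. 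A chain of triangle inequalities then gives
\begin{equation*}
\|(X\otimes X - \mathbbm{1})\ket{\hat\psi}\|,\; \|(Z\otimes Z - \mathbbm{1})\ket{\hat\psi}\| \le K_1\sqrt{\epsilon} + K_2\epsilon^{1/4},
\end{equation*}
for some positive constants $K_1,K_2$ built from those of Lemmas \ref{apprxsubspace}--\ref{robustoperatorslemma}.

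Now I would apply the identity
\begin{equation*}
\ket{\phi^+}\!\bra{\phi^+} = \tfrac{1}{4}(\mathbbm{1} + X\otimes X)(\mathbbm{1} + Z\otimes Z) = \tfrac{1}{4}\bigl(\mathbbm{1} + X\otimes X + Z\otimes Z - Y\otimes Y\bigr),
\end{equation*}
so that
\begin{equation*}
|\braket{\hat\psi|\phi^+}|^2 = \tfrac{1}{4}\bigl(1 + \braket{X\otimes X}_{\hat\psi} + \braket{Z\otimes Z}_{\hat\psi} - \braket{Y\otimes Y}_{\hat\psi}\bigr).
\end{equation*}
For any Hermitian unitary $\mathcal O$, $\|(\mathcal O-\mathbbm{1})\ket{\hat\psi}\|^2 = 2 - 2\braket{\mathcal O}_{\hat\psi}$, so the previous bounds give $\braket{X\otimes X}_{\hat\psi},\, \braket{Z\otimes Z}_{\hat\psi} \ge 1 - \tfrac{1}{2}(K_1\sqrt{\epsilon}+K_2\epsilon^{1/4})^2$. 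The only non-trivial point is the $Y\otimes Y$ term, which I would handle through the algebraic identity $(X\otimes X)(Z\otimes Z) = -Y\otimes Y$ together with a further triangle inequality exploiting the unitarity of $X\otimes X$:
\begin{align*}
\|(Y\otimes Y + \mathbbm{1})\ket{\hat\psi}\| &\le \|(X\otimes X)(Z\otimes Z - \mathbbm{1})\ket{\hat\psi}\| \\
&\quad + \|(X\otimes X - \mathbbm{1})\ket{\hat\psi}\| \\
&\le 2\bigl(K_1\sqrt{\epsilon}+K_2\epsilon^{1/4}\bigr),
\end{align*}
whence $-\braket{Y\otimes Y}_{\hat\psi} \ge 1 - 2(K_1\sqrt{\epsilon}+K_2\epsilon^{1/4})^2$. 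Substituting all three bounds into the projector expression and expanding $(K_1\sqrt{\epsilon}+K_2\epsilon^{1/4})^2 = K_1^2\epsilon + 2K_1K_2\epsilon^{3/4} + K_2^2\sqrt{\epsilon}$ yields the claimed form $1 - (s_1\epsilon + s_2\epsilon^{3/4} + s_3\sqrt{\epsilon})$.

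The expected main obstacle is purely book-keeping: tracking how the constants from Lemmas \ref{apprxsubspace} and \ref{robustoperatorslemma} propagate through the repeated triangle inequalities and through the squaring step that converts vector-norm bounds into expectation-value bounds. No new conceptual ingredient beyond the stabilizer identity for $\ket{\phi^+}$ and the operator certification already established in Lemma \ref{robustoperatorslemma} is required.
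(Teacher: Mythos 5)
Your proposal is correct and follows the same overall strategy as the paper: both arguments reduce the fidelity claim to showing that the three stabilizer correlators $\braket{X\otimes X}$, $\braket{Z\otimes Z}$ and $-\braket{Y\otimes Y}$ in the state $\ket{\hat\psi}$ are each $1-O(\sqrt{\epsilon})$, using the chains $\lVert X\otimes\mathbbm{1}-A_1\rVert$, $\lVert(A_1-A_4)\ket{\psi}\rVert$, $\lVert A_4-\mathbbm{1}\otimes X\rVert$ built from Eqs.~\eqref{srcbound3}--\eqref{srcbound5} and Lemma~\ref{robustoperatorslemma}. The difference is in the last step. The paper bounds $-\braket{Y\otimes Y}$ independently through the certified forms of $A_3$ and $A_6$, then writes $\ket{\hat\psi}=\cos\theta\ket{\phi^+}+\sin\theta\ket{\phi^-}$ and reads off the fidelity from the single worst of the three bounds; this ansatz step is somewhat informal, since the $+1$ eigenspace of $-Y\otimes Y$ alone is two-dimensional and closeness of that one correlator to $1$ does not by itself isolate $\ket{\phi^+}$. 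You instead invoke the exact projector identity $\ket{\phi^+}\!\bra{\phi^+}=\tfrac{1}{4}(\mathbbm{1}+X\otimes X+Z\otimes Z-Y\otimes Y)$ and obtain the $Y\otimes Y$ bound algebraically from the $X\otimes X$ and $Z\otimes Z$ ones via $(X\otimes X)(Z\otimes Z)=-Y\otimes Y$ and unitary invariance of the norm. This is a cleaner and fully rigorous closure: it needs no ansatz for the state, it does not rely on the certification of $A_3$ and $A_6$ at all for the state part, and it still lands on the advertised $1-(s_1\epsilon+s_2\epsilon^{3/4}+s_3\sqrt{\epsilon})$ form (with different, but still finite positive, constants). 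The only caveat is book-keeping, as you note: when you push $A_4\ket{\psi}$ into $V$ you should cite Lemma~\ref{apprxsubspace} for the $2C_4\sqrt{\epsilon}$ cost of replacing $A_4\ket{\psi}$ by $PA_4P\ket{\psi}$, which you do implicitly; with that made explicit the argument is complete.
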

\begin{proof}
Let us consider the following expectation value in any arbitrary state from $V$,
\begin{align*}
    \braket{X\otimes X}&= 1-\frac{1}{2}\lVert X\otimes \mathbb{1}-\mathbb{1}\otimes X\rVert^2
    \geq 1-\frac{1}{2}\left(\lVert X\otimes \mathbb{1}-A_1\rVert^2+\lVert A_1-A_4\rVert^2+\lVert A_4-\mathbb{1}\otimes X\rVert^2\right),\\
    &\geq 1-2\left[4C_4^2+c_4^2\right]\epsilon-32c_4 C \epsilon^{3/4}-128 C^2 \sqrt{\epsilon},
\end{align*}
where we have used triangle inequality in the first line, then used Eq. \eqref{srcbound3} and Lemma \ref{robustoperatorslemma} in the second line. In the same way by using Lemma \ref{robustoperatorslemma} and Eqs. \eqref{srcbound4} and \eqref{srcbound5}, we can show respectively that 
    \begin{align}
        \braket{Z\otimes Z}&\geq 1-2\left[9C_\gamma^2+C_2^2+c_2^2+C_5^2\right]\epsilon-4(3C_\gamma +4c_2) C\epsilon^{3/4} -34 C^2\sqrt{\epsilon}\nonumber \\
       - \braket{Y\otimes Y}&= 1-\frac{1}{2}\lVert X\otimes Z-Z\otimes X\rVert^2
    \geq 1-\frac{1}{2}\left(\lVert X\otimes Z-A_3\rVert^2+\lVert A_3-A_6\rVert^2+\lVert A_6-Z\otimes X\rVert^2\right),\nonumber\\
        &\geq 1-2\left[c_3^2+c_6^2+C_3^2+C_6^2\right]\epsilon-32 \left(c_3+c_6\right)C \epsilon^{3/4} -256 C^2\sqrt{\epsilon},\label{yystate}
    \end{align}
    where we have used ${\rm i} XZ=- {\rm i}ZX=Y$ with ${\rm i}=\sqrt{-1}$. 
From the above three relations, it is straightforward to guess that the state must be of the form $\ket{\hat\psi}=\cos(\theta)\ket{\phi^+}+\sin(\theta)\ket{\phi^-}$, where $\ket{\phi^-}$ is orthogonal to $\ket{\phi^+}$. The lowest bound on the fidelity of $\ket{\hat\psi}$ with $\ket{\phi^+}$ comes from the last relation \eqref{yystate}. Therefore, we get 
\begin{align*}
        \cos^2(\theta)-\sin^2(\theta)\geq 1-2\left(s_1\epsilon+s_2\epsilon^{3/4}+s_3\sqrt{\epsilon}\right),
\end{align*}
where $s_1,s_2,s_3$ denote the constant factors (divided by $2$) in front of $\epsilon, \epsilon^{3/4},\sqrt{\epsilon}$ respectively in the Eq. \eqref{yystate}. This immediately implies the following
\begin{align*}
    |\braket{\hat\psi|\phi^+}|^2=\cos^2(\theta)\geq 1-\left(s_1\epsilon+s_2\epsilon^{3/4}+s_3\sqrt{\epsilon}\right).
    \end{align*}
    Hence, we prove the lemma.
\end{proof}
\end{document}